\newcommand\independent{\protect\mathpalette{\protect\independent}{\perp}} 
\def\independent#1#2{\mathrel{\rlap{$#1#2$}\mkern2mu{#1#2}}}
\newcommand{\mR}{\mathbb{R}} 
\newcommand{\mZ}{\mathbb{Z}}
\newcommand{\pp}{\mathbb{P}}
\newcommand{\cch}{\boldsymbol{h}}
\newcommand{\cv}{\boldsymbol{v}}
\newcommand{\cX}{\boldsymbol{X}}
\newcommand{\cY}{\boldsymbol{Y}}
\newcommand{\cH}{\mathcal{H}}
\theoremstyle{definition}
\newtheorem{definition}{Definition}
\theoremstyle{plain}
\newtheorem{thm}{Theorem}
\theoremstyle{plain}
\theoremstyle{plain}
\newtheorem{lemma}{Lemma}
\theoremstyle{plain}
\theoremstyle{plain}
\theoremstyle{remark}
\newtheorem{remark}{Remark}
\theoremstyle{discussion}
\theoremstyle{plain}
\newtheorem{conj}{Conjecture}
\begin{document}

\title{A new entropy power inequality for integer-valued random variables}

\author{Saeid Haghighatshoar, Emmanuel Abbe, Emre Telatar \\ Emails: \{saeid.haghighatshoar@epfl.ch, eabbe@princeton.edu, emre.telatar@epfl.ch\}}
\maketitle

\begin{abstract}
The entropy power inequality (EPI) provides lower bounds on the differential entropy of the sum of two independent real-valued random variables in terms of the individual entropies. Versions of the EPI for discrete random variables have been obtained for special families of distributions with the differential entropy replaced by the discrete entropy, but no universal inequality is known (beyond trivial ones).
More recently, the sumset theory for the entropy function provides a sharp inequality  $H(X+X')-H(X)\geq \frac{1}{2} -o(1)$ when $X,X'$ are i.i.d.\ with high entropy.  
This paper provides the inequality $H(X+X')-H(X)  \geq g(H(X))$, where $X,X'$ are arbitrary i.i.d. integer-valued random variables and where $g$ is a universal strictly positive function on $\mR_+$ satisfying $g(0)=0$.  
Extensions to non identically distributed random variables and to conditional entropies are also obtained. 
\end{abstract}
\vspace{2mm}

\begin{IEEEkeywords}
\ Entropy inequalities, Entropy power inequality, Mrs. Gerber's lemma, Doubling constant, Shannon sumset theory.
\end{IEEEkeywords}

\section{Introduction}
For a continuous random variable\footnote{All continuous random variables are assumed to have well-defined differential entropies.} $\cX$ on $\mR^n$, let $h(\cX)$ be the differential entropy of $\cX$ and let $N(\cX)=2^{\frac{2}{n} h(\cX)}$ denote the {\it entropy power} of $\cX$. If $\cX$ and $\cY$ are two i.i.d.\ continuous random variables over $\mR^n$, the EPI states that 
\begin{align}
N(\cX+\cY) \geq N(\cX)+N(\cY), \label{epi1}
\end{align}
with equality if and only if $\cX$ and $\cY$ are Gaussian with proportional covariance matrices. 
A weaker statement of the EPI, yet of key use in applications, is the following inequality stated here for $n=1$, 
\begin{align}
h(X+X')-h(X) \geq \frac{1}{2},\label{epi2}
\end{align}
where $X,X'$ are i.i.d., and where equality holds if and only if $X$ is Gaussian.

The EPI was first proposed by Shannon \cite{shannon} who used a variational argument to show that Gaussian random variables $\cX$ and $\cY$ with proportional covariance matrices and specified differential entropies constitute a stationary point for $h(\cX+\cY)$. However, this does not exclude saddle points and local minima. The first rigorous proof of the EPI was given by Stam \cite{stam} in 1959, using the De Bruijin's identity which connects the derivative of the entropy with Gaussian perturbation to the Fisher information. 
This proof was further simplified by Blachman \cite{blachman}.
Another proof was proposed by Lieb\cite{lieb} based on an extension of Young's inequality. 

While there is a wide range of inequalities involving union of random variables, the EPI is the only general inequality in information theory estimating the entropy of a sum of independent random variables by means of the individual entropies. It is used as a key ingredient to prove converse results in coding theorems \cite{bergmans, hellman, ozarow, oo, steinberg_shamai}.

There have been numerous extensions and simplifications of the EPI over the reals \cite{verdu, rioul, costa, dembo, villani, zamir, liu, liuliu, naor, tulino,madiman}. There have also been several attempts to obtain discrete versions of the EPI, using Shannon entropy. Of course, it is not clear what is meant by a discrete version of the EPI, since \eqref{epi1}, \eqref{epi2} clearly do no hold verbatim for  Shannon entropy. 

Several extensions have yet been developed. First, there have been some extensions using finite field additions, for example, the so-called Mrs.\ Gerber's Lemma (MGL) proved in \cite{wyner_ziv} by Wyner and Ziv for binary alphabets. The MGL was further extended by Witsenhausen \cite{witsenhausen} to non binary alphabets, who also provided counter-examples for the case of general alphabets.  
More recently, \cite{jog} obtained EPI and MGL results for abelian groups of order $2^n$.
Second, concerning discrete random variables and addition over the reals, Harremoes and Vignat \cite{vignat} proved that the discrete EPI holds for binomial random variables with parameter $\frac{1}{2}$, which later on was generalized by Sharma, Das and Muthukrishnan \cite{sharma}. Yu and Johnson \cite{johnson} obtained a version of the EPI for discrete random variables using the notion of thinning.

More recently, Tao established in \cite{tao} a sumset theory for  Shannon entropy, obtaining in particular the sharp inequality  
$$H(X+X')-H(X)\geq \frac{1}{2} -o(1),$$
where $o(1)$ vanishes when $H(X)$ tends to infinity. Further results were obtained for the differential entropy in \cite{konto}.

In this paper, we are interested in integer-valued random variables with arithmetic over the reals. We  show that there exists an increasing function $g:\mR_+ \to \mR_+$, such that $g(x)=0$ if and only if $x=0$, and  
$$H(X+X')-H(X)\geq g(H(X)),$$
for any i.i.d.\ integer-valued random variables $X,X'$. 
Although we have provided an explicit characterization of $g$, we found that  proving the existence of such a function (even without explicit characterization) is equally challenging. We further generalize the result to non identically distributed random variables and to conditional entropies. We also discuss some open problems in Section \ref{open_problems}, in particular, a closure convexity conjecture which would strengthen the conditional entropy result. 

The results obtained in this paper were used in \cite{absorption} to prove a polarization coding result for discrete random variables using Hadamard matrices over the reals.

{\bf Notation:} The set of integers and reals will be denoted by $\mZ$ and $\mR$. Similarly, $\mZ_+$ and $\mR_+$ will denote the set of positive integers and positive reals. We will use large letters for random variables and small letters for  their realizations (the random variable $X$ can have  realization $x$). The natural logarithm and the logarithm in base $2$ will be denoted by $\ln$ and $\log_2$ respectively and for $x\in [0,1]$, $h_2(x)=-x\log_2(x) - (1-x)\log_2(1-x)$ will denote the binary entropy function with the convention that $0\log_2(0)=0$. The entropy of a discrete random variable $X$ in base $2$ (bits) will be denoted by $H(X)$. We will interchangeably use $H(p)$ or $H(X)$, where $p$ is the probability distribution of $X$. The conditional entropy of a random variable $X$ given another random variable $Y$ will be denoted by $H(X|Y)$.  For $a,b\in \mR$, we will use $a\vee b$ and $a\wedge b$ for the maximum and minimum of $a$ and $b$. Also $a^+=a\vee 0$ will denote the positive part of $a$.

\section{Results}\label{results}
In this section, we will give an overview of the results proved in the paper. The first theorem gives a lower bound on the entropy gap of sum of two i.i.d. random variables as a function of their entropies. 

\begin{thm}\label{epi_iid_Z}
There is a function $g:\mR_+ \to \mR_+$ such that for any two i.i.d. $\mZ$-valued random variables $X,X'$ with probability distribution $p$, $$H(p\star p) -H(p) \geq g(H(p)).$$ Moreover, $g$ is an increasing function, $\lim_{c\to \infty} g(c)=\frac{1}{8} \log_2(e)$ and $g(c)=0$ if and only if $c=0$.
\end{thm}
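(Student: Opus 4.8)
Write $\Delta(p):=H(p\star p)-H(p)$. The easy half of the statement is $\Delta(p)\ge 0$, because $H(X+X')\ge H(X+X'\mid X')=H(X\mid X')=H(X)$; this shows $g\equiv 0$ works but is worthless. The whole content is that $\Delta(p)$ is small only when $H(p)$ is small, quantitatively. The plan is to prove a linear lower bound $\Delta(p)\gtrsim H(p)$ in a low-entropy regime, a uniform constant lower bound on the complementary regime, and then glue the two into one increasing $g$ with the stated limit.

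\emph{Low-entropy regime.} Translating so the mode of $p$ is at $0$, write $p_0=1-\delta$ and let $r$ be the normalization of $p$ off $0$, so that $p=(1-\delta)\delta_0+\delta\,r$ and $H(p)=h_2(\delta)+\delta H(r)$, whence $H(p)$ small forces $\delta$ small. Since $p\star p=(1-\delta)^2\delta_0+2\delta(1-\delta)\,r+\delta^2\,(r\star r)$, I would estimate $H(p\star p)$ through the identity $H(S)=H(S,A,A')-H(A,A'\mid S)$ with $S=X+X'$ and $(A,A')=(\mathbf 1[X\neq 0],\mathbf 1[X'\neq 0])$: the pair contributes $2h_2(\delta)$, given it the conditional entropy of $S$ equals $2\delta(1-\delta)H(r)+\delta^2 H(r\star r)$, and the correction satisfies $H(A,A'\mid S)\le(2\delta-\delta^2)\log_2 3$ (only values $s\neq 0$ are ambiguous, and then only among three possibilities). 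This yields $\Delta(p)\ge h_2(\delta)+\delta(1-2\delta)H(r)+\delta^2 H(r\star r)-2\delta\log_2 3$, and since $H(p)=h_2(\delta)+\delta H(r)$ with $\delta$ forced small, this is at least $c_0\,H(p)$ for an absolute $c_0>0$ once $H(p)$ is below a threshold $\theta_0$. In particular $\Delta$ does vanish as $H(p)\to 0$, with the extremal rate visible already in $p=(1-\delta)\delta_0+\delta\,\delta_1$, for which $\Delta(p)=h_2(\delta)-2\delta(1-\delta)$ exactly.

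\emph{Bounded/large-entropy regime.} Here one has to rule out ``large entropy, vanishing gap'' over a non-compact family of distributions --- the dangerous direction being $H(p)$ \emph{large}, where mass can simultaneously spread and escape to infinity. The engine I would use is the $2$-adic decomposition: write $X=2Y+B$ with $B=X\bmod 2$, set $C=B\oplus B'$, and note $X+X'=2Z+C$ with $Z=Y+Y'+\mathbf 1[B=B'=1]$. Since $(Y,B)\mapsto 2Y+B$ and $(Z,C)\mapsto 2Z+C$ are bijections onto $\mZ$,
\[
\Delta(p)=\bigl(H(C)-H(B)\bigr)+\bigl(H(Z\mid C)-H(Y\mid B)\bigr).
\]
The first bracket is $\ge 0$ by Mrs.\ Gerber's Lemma and is bounded below by a positive function of $H(B)$ vanishing only at $H(B)\in\{0,1\}$; for the second, conditioning on $C$ and expanding $Z$ shows it dominates $p_0^2\Delta(q^{(0)})+p_1^2\Delta(q^{(1)})+2p_0p_1\,d\bigl[q^{(0)};q^{(1)}\bigr]$, where $q^{(b)}$ is the law of $Y$ given $B=b$, $p_b=P(B=b)$, and $d[\cdot;\cdot]\ge 0$ is the entropy distance. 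Together with $\max_b H(q^{(b)})\ge H(X)-1$ this gives a recursion over scales; closing it into a \emph{uniform} positive constant requires feeding in the low-entropy estimate above (to handle the branch in which the odd part carries essentially all the entropy, where $X$ looks like a point mass plus a far-spread tail) together with the sumset/doubling-constant machinery over $\mZ$ (Tao's inequality $\Delta(p)\ge\tfrac12-o(1)$, or its underlying Freiman-type structure, for the branch in which the distribution is genuinely non-degenerate at some scale); an alternative input is a comparison of discrete and differential entropy via added sub-unit uniform noise and the classical EPI, though the integer-boundary corrections are delicate. Optimizing the resulting explicit bound produces the constant $\tfrac18\log_2 e$ in the limit $H(p)\to\infty$, i.e.\ $\Delta(p)\ge\tfrac18\log_2 e$ whenever $H(p)\ge\theta$, with $\theta$ chosen so that $c_0\theta=\tfrac18\log_2 e$.

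\emph{Assembly and the obstacle.} Taking $g$ to be an increasing minorant of $c\mapsto\min\{c_0 c,\ \tfrac18\log_2 e\}$ gives an increasing function with $g(c)=0$ iff $c=0$ and $\lim_{c\to\infty}g(c)=\tfrac18\log_2 e$. I expect the bounded/large-entropy step to be the main obstacle: the $2$-adic recursion does not close by itself, and extracting an honest explicit constant --- rather than mere positivity --- over all distributions of entropy bounded away from $0$ is exactly the delicate point; indeed the authors remark that even non-constructive existence of a positive $g$ is essentially as hard.
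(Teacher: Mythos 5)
Your low-entropy step is essentially a variant of the paper's ``spiky distribution'' lemma (the paper obtains $H(p\star p)-c\ge cx-h_2(x)$ with $x=\Vert p\Vert_\infty$ by writing $X=BR$ for a Bernoulli $B$ indicating whether $X$ sits at its mode, and conditioning on $B$), and your version can be made to work, though the bound $H(A,A'\mid S)\le(2\delta-\delta^2)\log_2 3$ needs a small repair: $S=0$ does not determine $(A,A')$ either, since it can arise from $X=-X'\ne 0$, so there is an additional term of order $h_2(\delta^2)$ (harmless, but it belongs in the estimate). The genuine gap is the complementary regime, and you have diagnosed it yourself: the theorem requires a uniform positive lower bound for \emph{every} $p$ with $H(p)\ge\theta_0$, including distributions such as $p=(1-\delta)\delta_0+\delta\,\mathrm{Unif}\{1,\dots,2^N\}$ with $\delta$ tiny and $N$ huge, which have moderate entropy, $\Vert p\Vert_\infty$ arbitrarily close to $1$, and mass spread out to infinity. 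Your $2$-adic recursion does not close (as you concede), Tao's inequality is asymptotic as $H(p)\to\infty$ and does not cover the intermediate range $[\theta_0,M]$ uniformly, and the continuous-EPI comparison is only gestured at; so the heart of the theorem is missing. The assertion that optimizing ``the resulting explicit bound'' would produce exactly $\tfrac18\log_2 e$ is likewise unsupported.

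The paper closes this gap with an elementary, self-contained mechanism absent from your proposal, and its dichotomy is on $\Vert p\Vert_\infty$ rather than on $H(p)$: both bounds are proved for all $p$ and their pointwise maximum is minimized over $x=\Vert p\Vert_\infty\in[0,1]$. For the non-spiky bound one picks $n$ with $\alpha\le p((-\infty,n])\le 1-\alpha$, where $\alpha=\frac{1-x}{2}$, and splits $p=\alpha_1p_1+\alpha_2p_2$ into the two normalized restrictions with disjoint, non-interlacing supports. Two short $\ell_1$ computations give $\Vert p\star p_1-p\star p_2\Vert_1\ge(1-x)\vee(4x-2)^+$. Then, setting $f(t)=H\bigl(t\,(p\star p_1)+(1-t)\,(p\star p_2)\bigr)$, which is concave with $f'(0^+)=+\infty$ and $f'(1^-)=-\infty$ (the two relative entropies between $p\star p_1$ and $p\star p_2$ are infinite), one compares $f(\alpha_1)$ to the endpoint on the far side of the maximizer and applies Pinsker's inequality to obtain $H(p\star p)-H(p)\ge\frac{\alpha^2}{2\ln 2}\Vert p\star p_1-p\star p_2\Vert_1^2$. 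This is exactly where $\frac{1}{8\ln 2}=\frac18\log_2 e$ comes from: as $x\to 0$ the bound tends to $\frac{(1/2)^2}{2\ln 2}$. The spiky bound $cx-h_2(x)$ then covers $x$ near $1$ for any value of $H(p)$, which is precisely your problematic ``point mass plus far-spread tail'' branch. To salvage your write-up, replace your entire second regime with this splitting-plus-Pinsker argument.
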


\begin{remark}
The function $g$ in Theorem \ref{epi_iid_Z} is given by 
\begin{align*}
g(c)=\min_{x\in [0,1]}  \{&(c x-h_2(x)) \ \vee \\
 & \frac{(1-x)^2 ((1-x)\vee (4x-2)^+)^2}{8 \ln(2)}\}.
\end{align*}
\end{remark}

\begin{remark}
As we mentioned in the introduction,  a recent result by Tao \cite{tao} implies that for a discrete $\mZ$-valued random variable of very large entropy $H(p \star p) - H(p) \approx \frac{1}{2}$. In comparison with this result, we only get an asymptotic lower bound of $\frac{1}{8} \log_2(e) \approx 0.18$. We will see later that, the asymptotic lower bound $0.18$ is also valid for independent but not necessarily identically distributed random variables provided that the entropy of both random variables approaches infinity. 
\end{remark}

The next theorem extends the i.i.d. result to the general independent  case.
\begin{thm}\label{epi_niid_Z}
There is a function $g:\mR^2_+ \to \mR_+$ such that for any two independent $\mZ$-valued random variables $X,X'$ with probability distributions $p_1,p_2$, $$H(p_1\star p_2)-\frac{H(p_1)+H(p_2)}{2} \geq g(H(p_1),H(p_2)).$$ 
Moreover, $g$ is a positive and doubly-increasing\footnote{A function $g:\mR^2_+ \to \mR_+$ is doubly-increasing if for any value of one of the arguments, it is an increasing function of the other argument.} function of its arguments, $\lim_{(c,d)\to (\infty,\infty)} g(c,d)=\frac{1}{8}\log_2(e)$ and $g(c,d)=0$ if and only if $c=d=0$.
\end{thm}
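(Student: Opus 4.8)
The plan is to reduce the non-identically-distributed statement to the i.i.d.\ case (Theorem \ref{epi_iid_Z}) via a symmetrization trick. Given independent $\mZ$-valued random variables $X \sim p_1$ and $X' \sim p_2$, introduce an independent fair coin $B \in \{0,1\}$ and form the mixture distribution $q = \frac{1}{2}(p_1 + p_2)$; equivalently, let $Y = X$ if $B=0$ and $Y = X'$ if $B=1$, and similarly define $Y'$ from an independent copy $B'$. Then $Y, Y'$ are i.i.d.\ with distribution $q$, so Theorem \ref{epi_iid_Z} gives $H(q \star q) - H(q) \geq g_1(H(q))$, where $g_1$ is the i.i.d.\ function. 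The idea is to unpack $H(q\star q)$ and $H(q)$ in terms of the four convolutions $p_i \star p_j$ and the four entropies $H(p_i)$, using the conditioning on $(B,B')$: since $Y+Y'$ given $(B,B')$ is one of $X+X$-type, $X+X'$-type, $X'+X'$-type sums, we get
\begin{align*}
H(q\star q) &\leq H(Y+Y') \leq 1 + H(Y+Y' \mid B+B'), \\
H(q) &= 1 + \tfrac{1}{2}H(p_1) + \tfrac{1}{2}H(p_2) - \text{(something nonnegative)},
\end{align*}
where the last line is just $H(Y) = H(B) + H(Y\mid B) \ge$ the average, and more carefully $H(q) \le 1 + \frac12 H(p_1) + \frac12 H(p_2)$ while $H(q) \ge \frac12(H(p_1)+H(p_2))$ by concavity of entropy.

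Carrying this out, the conditional version of the i.i.d.\ bound (which the paper presumably also establishes, or which follows by applying Theorem \ref{epi_iid_Z} pointwise in the conditioning variable and averaging, using that $g_1$ is increasing hence one can lower-bound by plugging in the conditional entropy and then invoke concavity/Jensen in the right direction) yields $H(Y+Y'\mid B,B') \ge H(Y\mid B) + \E\,g_1(H(Y\mid B))$ along the ``diagonal'' terms $B=B'$, and the cross term $H(X+X')$ is exactly what we want to bound below. Collecting the inequalities, the slack of $1$ bit coming from $H(B+B') \le \log_2 3 < 2$ versus $H(B)+H(B') = 2$ will have to be absorbed: specifically, writing everything out, one obtains an inequality of the shape
\[
H(p_1 \star p_2) \;\ge\; \tfrac{1}{2}\bigl(H(p_1)+H(p_2)\bigr) + \tfrac{1}{2}\,g_1\!\bigl(\tfrac{H(p_1)+H(p_2)}{2}\bigr) - c_0
\]
for an explicit constant, and then one defines
\[
g(c,d) \;=\; \Bigl(\tfrac12\, g_1\!\bigl(\tfrac{c+d}{2}\bigr) - c_0\Bigr)^+ \;\vee\; (\text{a second branch handling small } c,d),
\]
the second branch being needed precisely because the first branch is only useful once $\frac{c+d}{2}$ is large enough that $g_1$ of it exceeds the lost constant. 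For small entropies one argues directly: if $c=d=0$ then both $p_i$ are point masses and $H(p_1\star p_2)=0$ with equality, and for $c,d$ small but positive a compactness/continuity argument on the (effectively finite-dimensional after truncation) space of distributions shows $H(p_1\star p_2) - \frac{H(p_1)+H(p_2)}{2}$ is strictly positive and bounded away from $0$ on each compact piece — this mirrors how $g_1$ was shown positive in Theorem \ref{epi_iid_Z}. One then checks the stated properties of $g$: double-monotonicity is inherited from monotonicity of $g_1$ and of the small-entropy bound; the limit $\frac18\log_2(e)$ as $(c,d)\to(\infty,\infty)$ requires that the symmetrization not lose anything asymptotically, which is where I would need to be careful — rather than the lossy mixture bound above, in the large-entropy regime one should instead bound $H(X+X')$ directly using the doubling-constant / sumset machinery underlying Theorem \ref{epi_iid_Z}, applied to $X$ and $X'$ separately, since $H(X+X') \ge \frac12 H(X) + \frac12 H(X') + \frac{1}{2}(\text{doubling-type gap for each})$, and each gap tends to $\frac18\log_2(e)$.

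The main obstacle I anticipate is the asymptotic constant $\frac18\log_2(e)$: the crude symmetrization-to-i.i.d.\ reduction costs an additive constant (the $c_0$ above) that does \emph{not} vanish, so it cannot by itself give the sharp limit — it only gives $\frac12 \cdot \frac18\log_2(e)$ in the limit, or worse. To recover the full $\frac18\log_2(e)$ one has to go back into the proof of Theorem \ref{epi_iid_Z} and re-run its core estimate (presumably: control the doubling constant of $p$ via a variance/Fisher-information-type argument combined with Mrs.\ Gerber's Lemma, giving $H(p\star p) - H(p) \ge \frac{\Var\text{-proxy}}{8\ln 2}$-style bound) in a form that takes two \emph{different} input distributions. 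Concretely, the key lemma that must be generalized is the one that produces the $\frac{(1-x)^2(\cdots)^2}{8\ln 2}$ branch in the formula for $g$ in the i.i.d.\ remark: its two-distribution analogue, with $x$ replaced by a pair $(x_1,x_2)$ of ``mass concentration'' parameters for $p_1,p_2$, should give the second, dominant branch of $g(c,d)$ and hence the correct limit. Everything else — monotonicity, positivity, the $g(c,d)=0 \iff c=d=0$ characterization — is then routine bookkeeping combining the two branches with a $\max$.
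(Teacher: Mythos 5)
Your central reduction --- symmetrizing to the mixture $q=\tfrac12(p_1+p_2)$ and invoking Theorem \ref{epi_iid_Z} --- cannot work quantitatively, and for a reason you only half-acknowledge. Unpacking $H(q\star q)$ and $H(q)$ through the coins $(B,B')$ costs additive terms of order $H(B,B')=2$ bits (and the diagonal terms $H(p_i\star p_i)$ can only be bounded by $2H(p_i)$, costing still more), whereas the entire quantity you are trying to extract, $g_1(\cdot)$, is bounded above by $\tfrac{1}{8\ln 2}\approx 0.18$ bits. Consequently the first branch of your proposed $g$, namely $\bigl(\tfrac12 g_1(\tfrac{c+d}{2})-c_0\bigr)^+$, is identically zero for all $c,d$: the symmetrization yields nothing, not merely a degraded constant. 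This pushes the whole burden onto your ``second branch,'' and there the argument is not carried out. The compactness/continuity claim for small $c,d$ does not stand as stated: the set of pairs of distributions on $\mZ$ with prescribed entropies is infinite-dimensional and non-compact (even modulo translation), the infimum need not be attained, and truncation perturbs the entropies; moreover this is not how positivity of $g_1$ is established in Theorem \ref{epi_iid_Z} --- there it comes from an explicit pointwise two-branch bound (spiky versus non-spiky in $\|p\|_\infty$), not from compactness.

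Your closing paragraph correctly identifies what is actually needed --- a two-distribution generalization of the spiky bound and of the $\ell_1$-gap/Pinsker machinery --- but that generalization \emph{is} the proof, and it is not routine bookkeeping. The paper proves (i) a symmetric spiky bound $2H(p\star q)-c-d\ge dx-h_2(x)+cy-h_2(y)$ via the $X=BR$ decomposition applied in both directions, and (ii) the key new estimate that the \emph{sum} $\|q\star p_1-q\star p_2\|_1+\|p\star q_1-p\star q_2\|_1\ge 2(\alpha+\beta)$; each summand individually can be small, so one must add the two signed measures before applying the triangle inequality and exploit that $p_1\star q_1$ and $p_2\star q_2$ have non-overlapping supports. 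This leads to a constrained minimization $l(x,y)$ over a region $T(x,y)$ cut out by three inequalities, whose analysis also delivers the limit $\tfrac18\log_2(e)$. None of this is present in your proposal, so as written it establishes neither the existence of a positive $g$ nor the asymptotic value.
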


 \begin{remark}\label{rem:tightbound}
 One might be tempted to prove the stronger bound
 \begin{align}\label{strong_epi}
 H(p_1\star p_2) - \max \{H(p_1),H(p_2)\} \geq g(H(p_1),H(p_2)),
 \end{align}
 for some doubly-increasing function $g$. However, this  fails because, for example, assume that $p_1,p_2$ are uniform distributions over $\{1,2,\dots,M\}$ and $\{1,2,\dots,NM\}$, for some number $N\ge 2$. It is not difficult to show that 
 \begin{align*}
  H(p_1 \star p_2) - \max\{H(p_1), H(p_2)\} \leq \log_2(\frac{N+1}{N}),
 \end{align*}
 which decreases to $0$ with increasing $N$. Therefore, there is no hope to get a stronger result as in (\ref{strong_epi}), which holds universally for all distributions.\end{remark}
The next theorem extends the results in Theorem \ref{epi_iid_Z} to the conditional case. 
\begin{thm}\label{epi_iid_cond_Z}
There is a function $\tilde{g}: \mR_+ \to \mR_+$ such that for any two i.i.d. $\mZ$-valued pairs of random variables $(X,Y)$ and $(X',Y')$, $$H(X+X'|Y,Y')-H(X|Y) \geq \tilde{g}(H(X|Y)).$$ 
Moreover, $\tilde{g}:\mR_+ \to \mR_+$ is an increasing function and $\tilde{g}(c)=0$ if and only if $c=0$.
\end{thm}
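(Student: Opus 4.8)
The plan is to reduce Theorem~\ref{epi_iid_cond_Z} to the independent (non-identically distributed) bound of Theorem~\ref{epi_niid_Z} by conditioning on $(Y,Y')$, and then to turn the resulting estimate into one that depends only on $c:=H(X\mid Y)$. Write $p_y$ for the conditional law of $X$ given $Y=y$, and set $U:=H(p_Y)$, $V:=H(p_{Y'})$; since $(X,Y)$ and $(X',Y')$ are i.i.d., $U,V$ are i.i.d.\ nonnegative random variables with $\E[U]=\E[V]=c$. Conditioned on $Y=y$, $Y'=y'$, the variables $X$ and $X'$ are independent with laws $p_y$, $p_{y'}$, so $X+X'$ has law $p_y\star p_{y'}$; hence $H(X+X'\mid Y,Y')=\E\!\left[H(p_Y\star p_{Y'})\right]$ and $H(X\mid Y)=\E[U]$ (and everything is finite when $c<\infty$ since $H(p_y\star p_{y'})\le H(p_y)+H(p_{y'})$ forces $H(X+X'\mid Y,Y')\le 2c$).

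The first ingredient is that $H(p_Y\star p_{Y'})$ admits two \emph{complementary} pointwise lower bounds. On one hand, for independent integer-valued $A,B$ one has $H(A+B)\ge H(A+B\mid B)=H(A)$, so applied conditionally $H(p_Y\star p_{Y'})\ge U\vee V$ almost surely. On the other hand, Theorem~\ref{epi_niid_Z} applied to the conditional laws gives $H(p_Y\star p_{Y'})\ge \tfrac{U+V}{2}+g(U,V)$ almost surely. Taking expectations, and cancelling the average using $\E[U]=\E[V]$ in the second bound, we obtain
\begin{align*}
H(X+X'\mid Y,Y')-H(X\mid Y)\ \ge\ \max\Bigl\{\,\E\bigl[(V-U)^+\bigr],\ \E\bigl[g(U,V)\bigr]\,\Bigr\}.
\end{align*}
The crux is that neither term alone suffices. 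The $g$-bound is useless by itself because $g$ is not concave: with $c$ fixed one can make $\E[g(U,V)]$ arbitrarily small (take $U=0$ with probability $1-\e$ and $U=c/\e$ with probability $\e$). The $\max$-bound is useless by itself when the conditional-entropy profile is essentially constant.

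The second ingredient is therefore a dichotomy on the spread of $U$. If $\pp(U\ge c/2)\ge \tfrac12$, then by independence $\pp(U\ge c/2,\ V\ge c/2)\ge\tfrac14$, and since $g$ is doubly-increasing and nonnegative, $\E[g(U,V)]\ge\tfrac14\,g(c/2,c/2)$. If instead $\pp(U<c/2)>\tfrac12$, then from the pointwise inequality $(V-U)^+\ge (V-c/2)^+\mathbf{1}_{\{U<c/2\}}$, independence, and $\E\bigl[(V-c/2)^+\bigr]\ge\E[V]-c/2=c/2$ we get $\E[(V-U)^+]\ge\tfrac12\cdot\tfrac{c}{2}=\tfrac{c}{4}$. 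In either case
\begin{align*}
H(X+X'\mid Y,Y')-H(X\mid Y)\ \ge\ \tilde g(c),\qquad \tilde g(c):=\min\Bigl\{\tfrac14\,g\bigl(\tfrac c2,\tfrac c2\bigr),\ \tfrac{c}{4}\Bigr\},
\end{align*}
which is increasing (a minimum of two increasing functions of $c$, using that $g$ is doubly-increasing), satisfies $\tilde g(0)=0$ since $g(0,0)=0$, and is strictly positive for $c>0$ since $g(c/2,c/2)>0$ for $c>0$ by Theorem~\ref{epi_niid_Z}. The only real obstacle is the one flagged above: a single application of Theorem~\ref{epi_niid_Z} in expectation does not work, and one must play it against the elementary $\max$-bound through the concentration/spread dichotomy; the rest is bookkeeping.
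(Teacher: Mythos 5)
Your proof is correct, and it takes a genuinely different route from the paper's. The paper first uses a Carath\`eodory-type reduction (Lemma \ref{finite_support}) to restrict $Y,Y'$ to support size two (relaxing the identically-distributed requirement along the way), and then splits on whether $\pp(Y=0)$ is near $\{0,1\}$: that regime is handled by a rather delicate shifting argument that drives $H(Y|X)\to 0$ (Lemma \ref{small_alpha_beta}), and the complementary regime by keeping a single term of $\sum_{k,l} q_kq'_l\bigl(H(p_k\star p'_l)-\tfrac{H(p_k)+H(p'_l)}{2}\bigr)$ with $H(p_k),H(p'_l)\ge c$ (Lemma \ref{large_alpha_beta}). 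You instead work with the full conditional-entropy profile $U=H(p_Y)$, play the elementary bound $H(p_y\star p_{y'})\ge H(p_y)\vee H(p_{y'})$ against Theorem \ref{epi_niid_Z} applied conditionally, and resolve the tension by a dichotomy on $\pp(U\ge c/2)$; all the steps check out (the identity $\E[U\vee V]-\E[U]=\E[(V-U)^+]$, the use of $g\ge 0$ and double monotonicity in Case A, and independence of $U,V$ in both cases), and your two-point example correctly explains why $\E[g(U,V)]$ alone cannot work. Your route avoids both the support reduction and the shifting lemma, and it even yields a quantitatively stronger bound for large $c$: your $\tilde g(c)\to\tfrac14\cdot\tfrac18\log_2 e\approx 0.045$, whereas the paper's $\min_\delta\{(g(c,c)-h_2(\delta))\vee\delta^2 g(c,c)\}$ degrades to order $10^{-4}$ because $h_2(\delta)$ already reaches $g(c,c)\approx 0.18$ near $\delta\approx 0.027$. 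The one thing the paper's argument delivers that yours does not immediately is a bound for merely independent (non-identically distributed) pairs with $H(X|Y)=H(X'|Y')=c$: there $U$ and $V$ are independent with equal means but possibly different laws, so your two-case split would need a third case (e.g.\ when exactly one of $\pp(U\ge c/2),\pp(V\ge c/2)$ is at least $\tfrac12$); this is easily patched but worth noting, since the theorem as stated only requires the i.i.d.\ case and your argument fully covers it.
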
 

\begin{remark}
The function $\tilde{g}$ is given by 
\begin{align}\label{cond_epi_formula}
\tilde{g}(c)=\min _{\delta \in [0, \frac{1}{2}]} \{ (g(c,c)-h_2(\delta)) \ \vee\ \delta^2 g(c,c)\},
\end{align}
where $g$ is  as in Theorem \ref{epi_niid_Z} and $h_2(\delta)$ is the binary entropy function.
\end{remark}

\section{Proof techniques}
In this part, we will try to give an overview and also some intuition about the techniques used for proving the  theorems. 

\subsection{EPI for  i.i.d. random variables}
We will start from the EPI for i.i.d. random variables. The main idea of the proof is to find suitable bounds for $H(p \star p)-H(p)$ in two different cases: one case in which $p$ is a spiky distribution, namely, there is an $i\in \mZ$ such that $p_i$ is substantially high, and the other case where $p$ is a quite flat and non-spiky distribution and then to combine these two bounds together.

\begin{lemma}\label{iid_spiky_bound}
Assume that $p$ is a probability distribution over $\mZ$ with $H(p)=c$ and let $x=\|p\|_\infty$. Then
$$H(p\star p)-c \geq cx - h_2(x),$$ where $h_2$ is the binary entropy function.
\end{lemma}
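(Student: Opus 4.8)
\emph{Proof plan.} Since entropy is translation invariant, I may assume the maximum mass of $p$ sits at $0$, i.e.\ $p_0 = x = \|p\|_\infty$. If $x = 1$ then $p = \delta_0$, both $H(p)$ and $H(p\star p)$ vanish, and $cx - h_2(x) = 0$, so the bound is trivial; hence assume $x < 1$ and decompose $p = x\,\delta_0 + (1-x)\,q$, where $q$ is the probability law on $\mZ \setminus \{0\}$ with $q_i = p_i/(1-x)$ for $i\neq 0$. For $X \sim p$, the indicator $B = \ind\{X=0\}$ is a deterministic function of $X$, so the chain rule gives $H(p) = H(B) + H(X \mid B) = h_2(x) + (1-x)H(q)$; thus $(1-x)H(q) = c - h_2(x)$.

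Next, take $X, X'$ i.i.d.\ $\sim p$, put $Z = X + X'$, and introduce $B' = \ind\{X'=0\}$. Conditioning cannot increase entropy, so $H(Z) \ge H(Z \mid B')$, and I evaluate the two branches separately. On $\{B'=1\}$ (probability $x$) we have $X' = 0$ and hence $Z = X$, which is independent of $B'$ and distributed according to $p$, so this branch contributes entropy exactly $c$. On $\{B'=0\}$ (probability $1-x$) we have $X' \sim q$ independent of $X \sim p$, so the conditional law of $Z$ is $p \star q$, contributing $H(p\star q)$. This yields $H(Z) \ge xc + (1-x)H(p\star q)$.

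Finally I lower bound $H(p\star q)$ using the elementary fact that adding an independent variable does not decrease entropy: for independent $U \sim p$, $V \sim q$ one has $H(U+V) \ge H(U+V \mid U) = H(V \mid U) = H(V)$, so $H(p\star q) \ge H(q)$. Plugging this in and using $(1-x)H(q) = c - h_2(x)$ gives $H(p \star p) = H(Z) \ge xc + (1-x)H(q) = xc + c - h_2(x)$, i.e.\ $H(p\star p) - c \ge cx - h_2(x)$, as claimed.

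The computation is short; the only point needing care is to use the side information \emph{asymmetrically}. If one instead conditioned $Z$ on the pair $(B,B')$ with $B = \ind\{X=0\}$, the dominant $xc$ term would be replaced by $x(1-x)H(q) = x(c - h_2(x))$ — smaller by exactly $x\,h_2(x)$ — and the resulting bound would fall short of $cx - h_2(x)$. Keeping $Z = X$ untouched on the branch $B'=1$, and only invoking $H(p\star q) \ge H(q)$ on the branch $B'=0$, is precisely what makes the constants line up.
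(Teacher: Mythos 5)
Your proof is correct and is essentially the paper's own argument: the paper writes $X=BR$ with $B$ the indicator of the atom of mass $x$, conditions $H(X+X')$ on $B$ alone (keeping $X'$ intact on the branch where $X=0$, which gives the $xc$ term), and then uses $H(X'+R)\geq H(R)$ together with $(1-x)H(R)=c-h_2(x)$ — exactly your two steps with the roles of $X$ and $X'$ swapped. The asymmetry you highlight at the end is indeed the crux, and it is the same asymmetry the paper exploits.
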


\begin{proof}
In appendix \ref{epi_iid_app}.
\end{proof}

\begin{remark}
Notice that Lemma \ref{iid_spiky_bound}, gives a very tight bound for spiky distributions for which $\|p\|_\infty$ is very close to $1$, namely, for $H(p)=c$, we get $H(p\star p)-c \simeq c$, which is the best  we can hope.
\end{remark}

The next step is to give a bound for non-spiky distributions. The main idea is that in this case, it is possible to decompose the probability distribution $p$ into two different parts $p_1,p_2$ with disjoint non-interlacing supports such that $p\star p_1$ and $p\star p_2$ are sufficiently far apart in $\ell _1$-distance. We formalize this through the following lemmas.

\begin{lemma}\label{l1_gap_iid}
Let $c>0$, $0<\alpha <\frac{1}{2}$ and $n \in \mZ$. Assume that $p$ is a probability measure over $\mZ$ such that  
 $\alpha \leq p((-\infty,n]) \leq 1-\alpha$ and $H(p)=c$, then 
 \begin{align*}
 \| p \star p_1 - p \star p_2 \|_1\geq 2\alpha,
 \end{align*}
  where
 $p_1 = \frac{1}{p((-\infty,n])}p|_{(-\infty,n]}$ and $p_2=\frac{1}{p([n+1,\infty))}p|_{[n+1,\infty)}$ are scaled restrictions of $p$ to $(-\infty,n]$ and $[n+1,\infty)$ respectively.
\end{lemma}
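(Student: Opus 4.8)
The plan is to realize the ``non-interlacing supports'' picture quantitatively via the dual description of the $\ell_1$ distance, $\|\mu\|_1 = \sup\{\langle f,\mu\rangle : \|f\|_\infty\le 1\}$ over signed measures. Set $\beta := p((-\infty,n])$; the hypothesis $0<\alpha$ forces $\beta\in(0,1)$, so $p_1$ and $p_2$ are genuine probability measures and $p = \beta\, p_1 + (1-\beta)\, p_2$ with $\supp p_1\subseteq(-\infty,n]$ and $\supp p_2\subseteq[n+1,\infty)$.

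First I would expand, using bilinearity of $\star$ and $p_1\star p_2 = p_2\star p_1$,
$$p\star p_1 - p\star p_2 = \beta\, (p_1\star p_1) - (1-\beta)\, (p_2\star p_2) + (1-2\beta)\,(p_1\star p_2),$$
and record the three support facts that drive the estimate: $p_1\star p_1$ is supported on $(-\infty,2n]$, $p_2\star p_2$ is supported on $[2n+2,\infty)$, and $p_1\star p_2$ is some probability measure. I then test this signed measure against $f := \ind_{(-\infty,2n]} - \ind_{[2n+1,\infty)}$, which satisfies $\|f\|_\infty\le 1$. Writing $q := (p_1\star p_2)((-\infty,2n])\in[0,1]$ and using that $p\star p_1$ and $p\star p_2$ are probability measures (so the masses on the two halves cancel), a short computation gives $\langle f,\, p\star p_1 - p\star p_2\rangle = 2\bigl(\beta + (1-2\beta)q\bigr)$.

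Finally, $q\mapsto \beta+(1-2\beta)q$ is affine on $[0,1]$ with endpoint values $\beta$ (at $q=0$) and $1-\beta$ (at $q=1$), hence is always at least $\min\{\beta,1-\beta\}\ge\alpha$; combining this with the displayed identity and $\|p\star p_1-p\star p_2\|_1\ge \langle f,\cdot\rangle$ yields the claim. I do not expect a serious obstacle here; the one point requiring care is the sign of $1-2\beta$: when $\beta>\tfrac12$ this coefficient is negative, and one must then bound the cross term using $q\le1$ rather than $q\ge0$, which is exactly why the two-sided hypothesis $\beta\le 1-\alpha$ is needed. (The entropy constraint $H(p)=c$ plays no role in this lemma; it is stated only so that the estimate can later be chained with entropy bounds.)
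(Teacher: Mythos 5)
Your proof is correct and follows essentially the same route as the paper's: the identical decomposition $p\star p_i=\beta\,p_1\star p_i+(1-\beta)\,p_2\star p_i$ combined with the key observation that $p_1\star p_1$ and $p_2\star p_2$ are supported on the disjoint half-lines $(-\infty,2n]$ and $[2n+2,\infty)$. The only difference is mechanical: you extract the bound by testing against $f=\ind_{(-\infty,2n]}-\ind_{[2n+1,\infty)}$, whereas the paper applies the reverse triangle inequality with a case split on $\beta\lessgtr\tfrac12$; your affine-in-$q$ argument handles both cases at once and yields the (slightly sharper) intermediate quantity $2\bigl(\beta+(1-2\beta)q\bigr)$ before minimizing.
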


\begin{proof}
In appendix \ref{epi_iid_app}.
\end{proof}

\begin{lemma}\label{l1_gap_iid_pinf}
Assume that $p_1$, $p_2$ and $p$ are arbitrary probability distributions over $\mZ$ such that $p_1$ and $p_2$ have non-overlapping supports and $\|p\|_\infty=x$. Then
 \begin{align*}
\| p \star p_1 - p \star p_2 \|_1\geq 2(2x-1)^+.
 \end{align*}
 \end{lemma}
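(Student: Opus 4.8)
The plan is to reduce the claim to a one-line application of the reverse triangle inequality after splitting off the spike of $p$. Write $q = p_1 - p_2$, so that $\|p\star p_1 - p\star p_2\|_1 = \|p\star q\|_1$. Since $p_1$ and $p_2$ have disjoint supports, $\|q\|_1 = \|p_1\|_1 + \|p_2\|_1 = 2$ and $\sum_k q(k) = 0$. I will also use the elementary fact that convolution with a probability distribution is an $\ell_1$-contraction: for any signed measure $q$ and any probability distribution $r$ on $\mZ$,
\[
\|r\star q\|_1 = \sum_k \Bigl| \sum_i r(i)\, q(k-i) \Bigr| \le \sum_i r(i) \sum_k |q(k-i)| = \|q\|_1 ,
\]
obtained by pulling the absolute value inside and swapping the order of summation.

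Now let $j\in\mZ$ attain $p(j) = \|p\|_\infty = x$, and decompose $p = x\,\delta_j + (1-x)\,r$, where $\delta_j$ is the point mass at $j$ and $r$ is the probability distribution defined by $r(j)=0$ and $r(i) = p(i)/(1-x)$ for $i\ne j$ (when $x=1$ the second term simply vanishes and $p=\delta_j$). One checks immediately that $r$ is a genuine probability distribution. Convolving with $q$ gives $p\star q = x\, q(\cdot - j) + (1-x)\,(r\star q)$, where $q(\cdot - j)$ denotes the translate of $q$ by $j$. Hence, by the reverse triangle inequality together with the contraction bound above,
\[
\|p\star q\|_1 \ \ge\ x\,\|q(\cdot - j)\|_1 - (1-x)\,\|r\star q\|_1 \ \ge\ 2x - 2(1-x) = 2(2x-1).
\]
Since $\|p\star q\|_1 \ge 0$ trivially, combining the two lower bounds yields $\|p\star p_1 - p\star p_2\|_1 \ge 2(2x-1)^+$, which is the assertion.

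There is no serious obstacle in this argument; it is essentially the observation that the mass-$x$ spike of $p$ forces $p\star p_1$ and $p\star p_2$ to differ by at least $2x$ in $\ell_1$ on the translates by $j$ of the (disjoint) supports of $p_1$ and $p_2$, while the remaining mass $1-x$ of $p$ can reduce this gap by at most $2(1-x)$. The only points needing a word of care are the degenerate case $x=1$ and the verification that $r$ is a probability distribution, both of which are immediate. This complements Lemma~\ref{l1_gap_iid}, which handled non-spiky $p$ via a cut of the real line, whereas here the bound is driven purely by the height of the largest atom of $p$.
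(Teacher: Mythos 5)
Your proof is correct and is essentially the paper's argument in slightly more packaged form: the paper isolates the term $j=n_0$ in the convolution sum and applies the triangle inequality term by term, which is exactly your decomposition $p = x\,\delta_j + (1-x)\,r$ followed by the reverse triangle inequality and the $\ell_1$-contraction property of convolution against a probability distribution. No gaps; the degenerate case $x=1$ and the final passage to $(2x-1)^+$ are handled the same way in both.
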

 \begin{proof}
 In appendix \ref{epi_iid_app}.
 \end{proof}

\begin{lemma}\label{pinsker_iid}
Assuming the hypotheses of Lemma \ref{l1_gap_iid},
 \begin{align*}
 H(p \star p) -c \geq  \frac{\alpha ^2}{2\ln(2)}  \| p \star p_1 - p \star p_2 \|_1^2.
 \end{align*}
 \end{lemma}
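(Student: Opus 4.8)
The plan is to interpret the two pieces of $p$ as the outcome of an auxiliary random label, reduce the entropy gap $H(p\star p)-c$ to a mutual information, and then control that mutual information with Pinsker's inequality. Set $\beta := p((-\infty,n])$, so by hypothesis $\alpha \le \beta \le 1-\alpha$, and $p = \beta\, p_1 + (1-\beta)\, p_2$, hence $p\star p = \beta\,(p\star p_1) + (1-\beta)\,(p\star p_2)$. Write $q_i := p\star p_i$ for $i=1,2$ and $\bar q := p\star p = \beta q_1 + (1-\beta) q_2$. Take $X,X' \iid p$ and let $W$ be the $\{1,2\}$-valued function of $X'$ that records whether $X' \le n$ or $X' \ge n+1$; then $\pp(W=1)=\beta$, the conditional law of $X+X'$ given $W=i$ is exactly $q_i$, and since $W$ is a function of $X'$ while $X \independent X'$, we have $X \independent X' \mid W$.

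Next I would turn the gap into a mutual information. By the chain rule, $H(X+X') = I(X+X';W) + H(X+X'\mid W)$, where $H(X+X'\mid W) = \beta\,H(q_1) + (1-\beta)\,H(q_2)$. For each $i$, conditioning reduces entropy and $X$ is independent of the summand drawn from $p_i$, so $H(q_i) = H(X + X'_i) \ge H(X + X'_i \mid X'_i) = H(X) = c$. Therefore $H(X+X'\mid W) \ge c$, and consequently $H(p\star p) - c \ge I(X+X';W)$.

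It then remains to bound $I(X+X';W)$ from below. Expanding, $I(X+X';W) = \beta\, D(q_1 \,\|\, \bar q) + (1-\beta)\, D(q_2 \,\|\, \bar q)$. Applying Pinsker's inequality $D(\mu\,\|\,\nu) \ge \frac{1}{2\ln 2}\|\mu-\nu\|_1^2$ to each term and using $\|q_1 - \bar q\|_1 = (1-\beta)\|q_1-q_2\|_1$ and $\|q_2 - \bar q\|_1 = \beta\|q_1-q_2\|_1$, one gets $I(X+X';W) \ge \frac{1}{2\ln 2}\big[\beta(1-\beta)^2 + (1-\beta)\beta^2\big]\|q_1-q_2\|_1^2 = \frac{\beta(1-\beta)}{2\ln 2}\|q_1-q_2\|_1^2$. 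Finally, since $\alpha \le \beta \le 1-\alpha$ the quantity $\beta(1-\beta)$ is minimized at the endpoints of that interval, so $\beta(1-\beta) \ge \alpha(1-\alpha)$, and since $\alpha < \tfrac12$ we have $\alpha(1-\alpha) \ge \alpha^2$. Combining with the previous paragraph yields $H(p\star p) - c \ge \frac{\alpha^2}{2\ln 2}\|p\star p_1 - p\star p_2\|_1^2$.

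The estimates here are almost entirely routine; the one step that carries the real content is the reduction in the second paragraph — realizing that adjoining the side information $W$ is "free" in the conditional term, because each conditional distribution $q_i$ still has entropy at least $c$ (this is where the fact that entropy of a sum dominates entropy of a summand is used), so that the gap $H(p\star p)-c$ is genuinely lower bounded by the mutual information $I(X+X';W)$, which Pinsker can then attack. A secondary point to get right is the direction of $\beta(1-\beta)\ge\alpha^2$, which relies on both $\beta\in[\alpha,1-\alpha]$ and $\alpha<\tfrac12$.
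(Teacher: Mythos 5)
Your proof is correct, and it takes a genuinely different route from the paper's. The paper works with the concave function $f(x)=H(x\nu_1+(1-x)\nu_2)$ for $\nu_i=p\star p_i$, locates its maximizer $x^\star$, splits into the cases $\alpha_1\le x^\star$ and $\alpha_1\ge x^\star$, and uses the sign of $f'(\alpha_1)$ to discard a cross term and reduce $f(\alpha_1)$ to $H(\nu_i)+D(\nu_i\,\|\,\mu_{\alpha_1})$ for a single, case-dependent $i$, before invoking $H(p\star p_i)\ge H(p)$ and Pinsker. You instead use the exact identity $H(p\star p)=I(X+X';W)+\beta H(q_1)+(1-\beta)H(q_2)$ together with $I(X+X';W)=\beta D(q_1\|\bar q)+(1-\beta)D(q_2\|\bar q)$, which packages both divergences at once with no case analysis and no derivative computations. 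The two arguments share their essential ingredients --- the mixture decomposition $p\star p=\beta\,q_1+(1-\beta)\,q_2$, the observation that $H(p\star p_i)\ge H(p)=c$ (entropy of an independent sum dominates that of a summand), and Pinsker --- but your bookkeeping yields the slightly sharper constant $\beta(1-\beta)\ge\alpha(1-\alpha)$ in place of $\alpha^2$, and it sidesteps the paper's claim that $f'(0)=+\infty$ and $f'(1)=-\infty$, which rests on $p\star p_1$ and $p\star p_2$ having different supports; that claim can fail when $p$ has full support on $\mZ$ (though the paper's argument only really needs a maximizer of the concave $f$ on $[0,1]$, which always exists). Both the reduction of the gap to $I(X+X';W)$ and the endpoint argument $\beta(1-\beta)\ge\alpha(1-\alpha)\ge\alpha^2$ are handled correctly.
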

 \begin{proof}
 In appendix \ref{epi_iid_app}.
 \end{proof}

\begin{lemma}\label{epi_iid_nonspiky}
Assume that $p$ is a probability distribution over $\mZ$ with $H(p)=c$ and $\|p\|_\infty=x$. Then 
\begin{align*}
 H(p \star p) -c \geq  \frac{(1-x)^2}{8\ln(2)} ((1-x)\vee (4x-2)^+)^2.
 \end{align*}
 \end{lemma}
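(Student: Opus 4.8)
\medskip
\noindent\textbf{Proof proposal.} The plan is to exhibit a \emph{balanced} cut point for $p$, turn the resulting separation into an $\ell_1$ gap via Lemmas \ref{l1_gap_iid} and \ref{l1_gap_iid_pinf}, and then feed that gap into the Pinsker-type bound of Lemma \ref{pinsker_iid}. If $x=1$ then $p$ is a point mass, $c=0$, and both sides vanish, so I may assume $x<1$. Write $F(n)=p((-\infty,n])$: a nondecreasing step function with $F(-\infty)=0$, $F(+\infty)=1$ and all jumps of size at most $\|p\|_\infty=x$.

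\smallskip
\noindent\textbf{Step 1 (a balanced cut).} First I would produce an integer $n$ with
\[
\alpha \ :=\ p((-\infty,n]) \wedge p([n+1,\infty)) \ \ge\ \frac{1-x}{2}.
\]
Let $m$ be the largest integer with $F(m)\le\tfrac12$, so $F(m+1)>\tfrac12$ and hence $F(m)=F(m+1)-p(\{m+1\})\ge F(m+1)-x$. Adding these, $F(m)+\bigl(1-F(m+1)\bigr)=1-p(\{m+1\})\ge 1-x$, so at least one of $F(m)$ and $1-F(m+1)$ is $\ge\tfrac{1-x}{2}$. If $F(m)\ge\tfrac{1-x}{2}$ take $n=m$: since $F(m)\le\tfrac12$ we get $\alpha=F(m)\ge\tfrac{1-x}{2}$. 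Otherwise take $n=m+1$: since $F(m+1)>\tfrac12$ we get $\alpha=1-F(m+1)\ge\tfrac{1-x}{2}$. In either case $0<\tfrac{1-x}{2}\le\alpha\le\tfrac12$, so $\alpha\le p((-\infty,n])\le 1-\alpha$ and the hypotheses of Lemma \ref{l1_gap_iid} are met (the boundary case $\alpha=\tfrac12$ is handled by a routine limiting argument).

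\smallskip
\noindent\textbf{Step 2 (the $\ell_1$ gap).} Let $p_1=p((-\infty,n])^{-1}\,p|_{(-\infty,n]}$ and $p_2=p([n+1,\infty))^{-1}\,p|_{[n+1,\infty)}$ be the scaled restrictions of Lemma \ref{l1_gap_iid}. They have non-overlapping supports, so Lemma \ref{l1_gap_iid_pinf} gives $\|p\star p_1-p\star p_2\|_1\ge 2(2x-1)^+=(4x-2)^+$, while Lemma \ref{l1_gap_iid} gives $\|p\star p_1-p\star p_2\|_1\ge 2\alpha\ge 1-x$. Hence
\[
\|p\star p_1-p\star p_2\|_1 \ \ge\ (1-x)\vee(4x-2)^+ .
\]

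\smallskip
\noindent\textbf{Step 3 (Pinsker).} Finally, Lemma \ref{pinsker_iid} together with $\alpha\ge\tfrac{1-x}{2}$ gives
\[
H(p\star p)-c \ \ge\ \frac{\alpha^2}{2\ln 2}\,\|p\star p_1-p\star p_2\|_1^2 \ \ge\ \frac{(1-x)^2}{8\ln 2}\,\big((1-x)\vee(4x-2)^+\big)^2,
\]
which is the claim. The only step requiring thought is Step 1: choosing the cut so that one can simultaneously exploit the mass-balance bound $2\alpha$ \emph{and} the big-atom bound $2(2x-1)^+$ while keeping $\alpha$ as large as $\tfrac{1-x}{2}$; Steps 2 and 3 are immediate from the cited lemmas. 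I expect the only genuine care to lie in the degenerate cases ($x$ near $1$, or $F$ equal to $\tfrac12$ exactly), which are inessential.
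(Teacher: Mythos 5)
Your proposal is correct and follows essentially the same route as the paper: choose a cut point $n$ with both tails of mass at least $\frac{1-x}{2}$, lower-bound $\|p\star p_1-p\star p_2\|_1$ by the maximum of the bounds from Lemmas \ref{l1_gap_iid} and \ref{l1_gap_iid_pinf}, and apply Lemma \ref{pinsker_iid}. The only difference is that you spell out the existence of the balanced cut, which the paper leaves as ``easy to show.''
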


\begin{proof}
In appendix \ref{epi_iid_app}.
\end{proof}
\vspace{2mm}

Now that we have the required bounds in the spiky and non-spiky cases, we can combine them to prove Theorem \ref{epi_iid_Z}.
\vspace{1mm}

\begin{proof}[{\bf Proof of Theorem \ref{epi_iid_Z}}]
Assume that $p$ is a probability distribution over $\mZ$ with $H(p)=c$ and $\|p\|_\infty=x$. It is easy to see that $x\geq 2^{-c}$. Also setting $\alpha= \frac{1-x}{2}$, there is an integer $n$ such that $\alpha \leq p((-\infty,n]) \leq 1-\alpha$. Using Lemma \ref{iid_spiky_bound} and Lemma \ref{epi_iid_nonspiky}, it results that $H(p\star p)-c \geq l(c)$, where
\begin{align*}
l(c)=\min_{ x\in [2^{-c},1]} &\{(c x-h_2(x)) \ \vee \\
 & \frac{(1-x)^2 ((1-x)\vee (4x-2)^+)^2}{8 \ln(2)}\}.
\end{align*}
We will use a simpler lower bound given by
\begin{align*}
g(c)=\min_{x\in [0,1]}  \{&(c x-h_2(x)) \ \vee \\
 & \frac{(1-x)^2 ((1-x)\vee (4x-2)^+)^2}{8 \ln(2)}\},
\end{align*}
where obviously $l(c)\geq g(c)$. It is easy to check that $g(c)$ is a continuous function of $c$. The monotonicity of $g$ follows from monotonicity of $c x - h_2(x)$ with respect to $c$, for every $x\in [0,1]$. For strict positivity, note that $(1-x)^2 ((1-x) \vee (4x-2)^+)^2$ is strictly positive for $x\in [0,1)$ and it is $0$ when $x=1$, but $\lim _{x \to 1} c x - h_2(x)=c$. Hence, for $c>0$, $g(c)>0$. If $c=0$ then 
\begin{align*}
 \{(c x- h_2(x)) &\vee \frac{(1-x)^2 ((1-x) \vee (4x-2)^+)^2}{8 \ln(2)}\}\\
 &=\frac{(1-x)^2 ((1-x) \vee (4x-2)^+)^2}{8 \ln(2)},
\end{align*} 
and its minimum over $[0,1]$ is $0$.

For asymptotic behavior, notice that at $x=0$, $c x - h_2(x)=0$ and $\frac{(1-x)^2 ((1-x)\vee(4x-2)^+)}{8 \ln(2)}=\frac{1}{8 \ln(2)}$. Hence, from continuity, it results that $g(c) \leq \frac{1}{8 \ln(2)}$ for any $c\geq 0$. Also for any $0<\epsilon<\frac{1}{2}$ there exists a $c_0$ such that for every $c>c_0$ and every $x$,  $\epsilon<x\leq 1$, $c x - h_2(x)\geq \frac{1}{8 \ln(2)}$. Thus for any $\epsilon>0$ there is a $c_0$ such that for $c>c_0$, the outer minimum over $x$ in the definition of $g(c)$ is achieved on $[0,\epsilon]$, which is higher than $\frac{(1-\epsilon)^4}{8 \ln(2)}$. 
 This implies that for every $\epsilon >0$,
\begin{align*}
\frac{1}{8 \ln(2)} \geq  \limsup _{c \to \infty} g(c) \geq \liminf_{c \to \infty} g(c) \geq \frac{(1-\epsilon)^4}{8 \ln(2)},
\end{align*}
and $\lim _{c \to \infty} g(c)=\frac{1}{8 \ln(2)}$.
\end{proof}
\vspace{2mm}
Figure \ref{fig_epi} shows the EPI gap. As expected, the asymptotic gap is $\frac{1}{8} \log_2(e) \approx 0.18$. 
\begin{figure}[h]
\centering
\includegraphics[width=2.5 in]{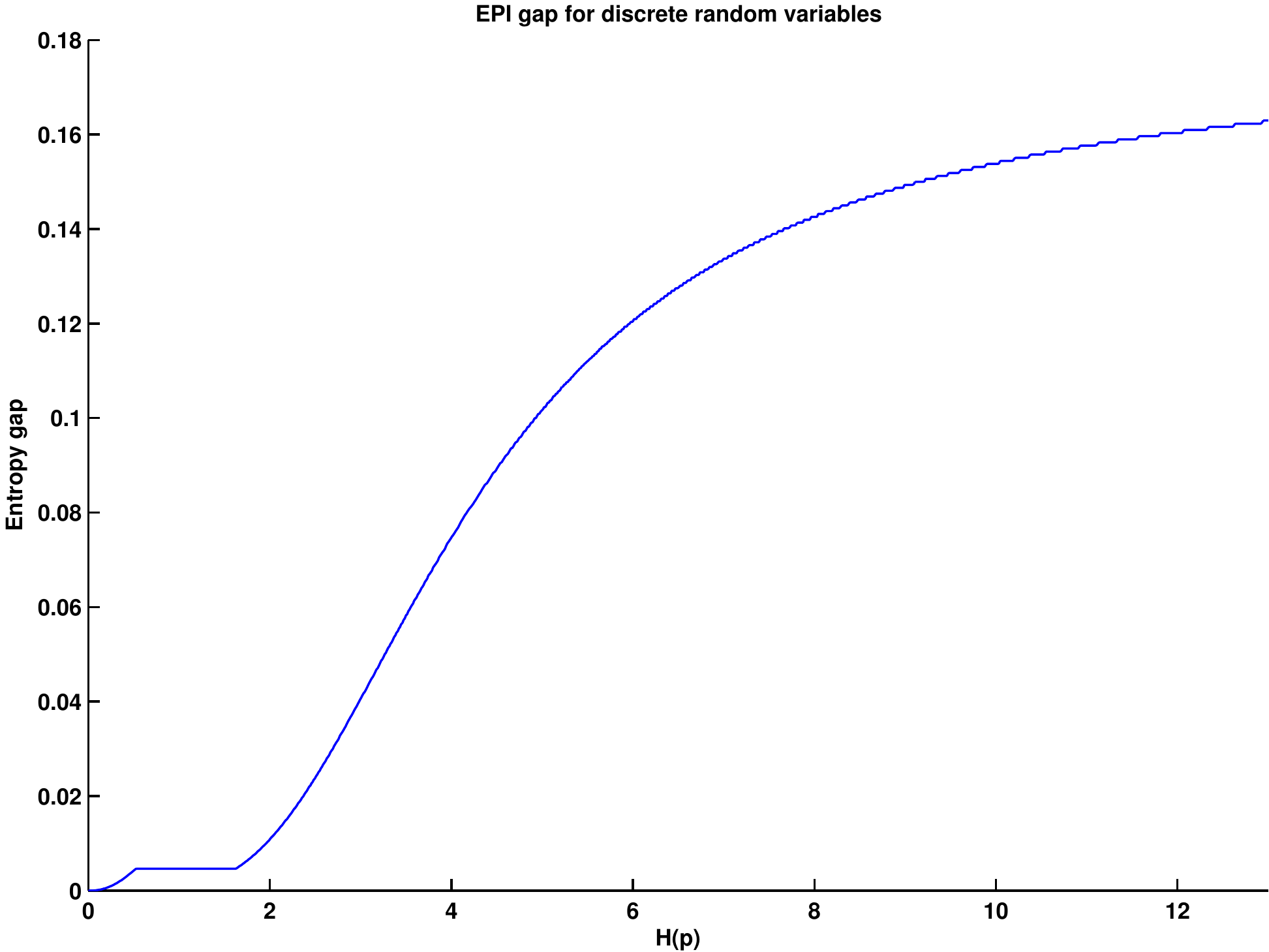}
\caption{The EPI gap for discrete random variables over $\mZ$}
\label{fig_epi}
\end{figure}

\subsection{EPI for non-i.i.d. random variables}
Theorem \ref{epi_niid_Z}  is an extension of Theorem \ref{epi_iid_Z} to independent but non identically distributed random variables. Similar to the i.i.d. case the idea is to distinguish between the spiky and non-spiky distributions. 

\begin{lemma}\label{niid_spiky_bound}
Assume that $p$ and $q$ are two probability distributions over $\mZ$ with $H(p)=c$ and $H(q)=d$. Suppose that $x=\|p\|_\infty$ and $y=\|q\|_\infty$. Then,
\begin{align}
2H(p\star q)-c -d \geq d x - h_2(x) + c y -h_2(y),
\label{eq:niid_spiky_equation}
\end{align}
where $h_2$ is the binary entropy function.
\end{lemma}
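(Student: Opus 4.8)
The plan is to first prove the one-sided inequality
\[
H(p\star q)-H(p)\ \ge\ \|p\|_\infty\,H(q)-h_2(\|p\|_\infty),
\]
valid for all $\mZ$-valued distributions $p,q$, and then to symmetrize. Note that this one-sided bound already contains Lemma \ref{iid_spiky_bound} as the special case $p=q$. Exchanging the roles of $p$ and $q$ turns it into $H(p\star q)-H(q)\ge \|q\|_\infty\,H(p)-h_2(\|q\|_\infty)$, i.e.\ $H(p\star q)-d\ge cy-h_2(y)$, while the original form reads $H(p\star q)-c\ge dx-h_2(x)$; summing these two inequalities is exactly (\ref{eq:niid_spiky_equation}).

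To establish the one-sided bound, let $X\sim p$ and $Y\sim q$ be independent, let $i^\ast$ be a mode of $p$ (so $p_{i^\ast}=x:=\|p\|_\infty$), and set $A=\mathbf{1}\{X=i^\ast\}$, which is $\mathrm{Bern}(x)$ and, being a function of $X$, independent of $Y$. Since conditioning cannot increase entropy, $H(X+Y)\ge H(X+Y\mid A)=x\,H(X+Y\mid A=1)+(1-x)\,H(X+Y\mid A=0)$. On $\{A=1\}$ we have $X+Y=i^\ast+Y$, a deterministic shift of $Y$, so $H(X+Y\mid A=1)=H(q)$. On $\{A=0\}$, $X$ has the renormalized law $p':=\frac{1}{1-x}\,p|_{\mZ\setminus\{i^\ast\}}$ and is still independent of $Y$, so $X+Y$ is conditionally distributed as $p'\star q$; and since convolution does not decrease entropy, $H(X+Y\mid A=0)\ge H(X+Y\mid A=0,Y)=H(X\mid A=0)=H(p')$. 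Hence $H(X+Y)\ge x\,H(q)+(1-x)H(p')$.

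It remains to recognize $(1-x)H(p')$: the chain rule for the pair $(X,A)$ gives $H(p)=H(A)+H(X\mid A)=h_2(x)+(1-x)H(p')$, so $(1-x)H(p')=H(p)-h_2(x)$. Substituting yields $H(X+Y)\ge x\,H(q)+H(p)-h_2(x)$, which is the one-sided bound. The degenerate case $x=1$ (then $p=\delta_{i^\ast}$, $H(p)=0$, and $p\star q$ is a shift of $q$) is immediate.

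The one genuinely non-mechanical point is the choice of what to condition on: one conditions on the mode-indicator of a \emph{single} summand, and the resulting slack $(1-x)H(p'\star q)$ must be charged against $H(p')$ — not against $H(q)$, which would be too lossy — after which the two sides match identically. Once the reduction to the one-sided form is seen, the remaining ingredients (the chain rule, ``conditioning reduces entropy'', ``convolution does not decrease entropy'', and the identity $H(p)=h_2(x)+(1-x)H(p')$) leave no real obstacle.
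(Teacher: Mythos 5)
Your proposal is correct and follows essentially the same route as the paper: both condition on the indicator that $X$ equals its mode, bound the non-mode branch via $H(X+Y\mid A=0)\ge H(X\mid A=0)$, use the decomposition $H(p)=h_2(x)+(1-x)H(p')$, and then symmetrize in $p$ and $q$ to obtain \eqref{eq:niid_spiky_equation}.
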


\begin{proof}
In appendix \ref{epi_niid_app}.
\end{proof}

When at least one of the distributions is spiky, Lemma \ref{niid_spiky_bound} gives a relatively tight bound. Hence, we should try to find a good bound for the non-spiky case.

\begin{lemma}\label{l1_gap_niid}
Let $p,q$ be two probability distributions over $\mZ$. Assume that there are $0<\alpha, \beta <\frac{1}{2}$ and $m,n \in \mZ$ such that $\alpha \leq p((-\infty,m]) \leq 1-\alpha$ and $\beta \leq q((-\infty,n]) \leq 1-\beta$. Then 
\begin{align*}
 \| q \star p_1 - q \star p_2 \|_1  + \| p \star q_1 - p \star q_2 \|_1  \geq 2 ( \alpha + \beta),
\end{align*}
where
$p_1 = \frac{1}{p((-\infty,m])}p|_{(-\infty,m]}$, $p_2=\frac{1}{p([m+1,\infty))}p|_{[m+1,\infty)}$, 
$q_1 = \frac{1}{q((-\infty,n])}q|_{(-\infty,n]}$, and $q_2=\frac{1}{q([n+1,\infty))}q|_{[n+1,\infty)}$.

\end{lemma}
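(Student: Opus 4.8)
The plan is to generalize the argument used for Lemma \ref{l1_gap_iid} (the i.i.d. $\ell_1$-gap lemma) to the two-distribution setting. The key structural observation is the same: splitting $p$ at the threshold $m$ into the normalized left piece $p_1$ and right piece $p_2$ creates two distributions whose convolutions with any fixed measure are "pushed" to opposite sides, and the amount of mass in each piece controls how far apart they are. First I would record the decompositions $p = p((-\infty,m])\, p_1 + p([m+1,\infty))\, p_2$ and $q = q((-\infty,n])\, q_1 + q([n+1,\infty))\, q_2$. Writing $a = p((-\infty,m])$, $b = q((-\infty,n])$, so that $\alpha \le a \le 1-\alpha$ and $\beta \le b \le 1-\beta$, the goal is to lower bound the sum of the two $\ell_1$ distances by $2(\alpha+\beta)$.

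The main technical ingredient I would use is the following elementary fact (essentially the content of Lemma \ref{l1_gap_iid}'s proof): if $r_1$ is supported on $(-\infty,m]$ and $r_2$ on $[m+1,\infty)$, then for any probability measure $s$, convolving preserves a "comparison of CDFs at a shifted threshold" — concretely, $(s\star r_1)((-\infty,k]) \ge (s\star r_2)((-\infty,k])$ uniformly in $k$ whenever one shifts the threshold appropriately, and the total variation distance between $s\star r_1$ and $s\star r_2$ equals the supremum over $k$ of the gap between these CDFs, which is at least the separation forced by the disjoint supports. Applying this once with $s=q$, $r_i = p_i$ and once with $s=p$, $r_i=q_i$, I would get
\begin{align*}
\|q\star p_1 - q\star p_2\|_1 &\ge 2\,\sup_k \big[(q\star p_1)((-\infty,k]) - (q\star p_2)((-\infty,k])\big],\\
\|p\star q_1 - p\star q_2\|_1 &\ge 2\,\sup_k \big[(p\star q_1)((-\infty,k]) - (p\star q_2)((-\infty,k])\big].
\end{align*}
The crux is then to relate these suprema back to $\alpha+\beta$. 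The cleanest route: evaluate $(q\star p)((-\infty,m+n])$ two ways. On one hand $q\star p = b\,(q_1\star p) + (1-b)(q_2\star p)$; on the other $p\star q = a\,(p_1\star q) + (1-a)(p_2\star q)$. Comparing $(q\star p_1)((-\infty,m+n])$ with the corresponding quantity for $p_2$, and using that $p_1$ lives left of $m$ while the "excess" mass near the threshold $n$ in $q$ is controlled by $\beta \le b \le 1-\beta$, yields the bound. Essentially one shows the first supremum is at least something like $b(1-\alpha) + (1-b)\cdot(\text{left tail})$ minus a matching lower piece, and symmetrically for the second; adding the two and using $\alpha \le a \le 1-\alpha$, $\beta \le b \le 1-\beta$ collapses the cross terms so that the total is at least $\alpha + \beta$ inside each factor of $2$.

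The step I expect to be the main obstacle is precisely this bookkeeping of how the two CDF-gap suprema combine: in the i.i.d. case one threshold and one distribution suffice, but here the two convolutions are anchored at different shifts ($m$, $n$, and $m+n$), and one must choose the evaluation points so that the contributions from $p$'s split and $q$'s split reinforce rather than cancel. I would handle this by evaluating everything at the single threshold $m+n$ (or $m+n$ and $m+n+1$) and carefully tracking which mass falls strictly left, strictly right, or "straddles" — the disjointness of $\supp(p_1),\supp(p_2)$ and of $\supp(q_1),\supp(q_2)$ guarantees no straddling within each split, which is what makes the inequality clean. Once the two suprema are each bounded below by the appropriate sum of tail masses, the constraints $\alpha \le a \le 1-\alpha$ and $\beta \le b \le 1-\beta$ finish it by a short convexity/monotonicity argument, exactly as at the end of the proof of Lemma \ref{l1_gap_iid}.
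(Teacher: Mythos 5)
Your strategy is viable and, once completed, essentially reproduces the paper's argument; the problem is that the decisive step is exactly the one you flag as ``the main obstacle'' and leave undone, and your sketch of how it resolves (a bound of the form $b(1-\alpha)+(1-b)\cdot(\text{left tail})$ ``minus a matching lower piece'') is not the right expression. Here is the missing computation. Write $\alpha_1=p((-\infty,m])$, $\alpha_2=1-\alpha_1$, $\beta_1=q((-\infty,n])$, $\beta_2=1-\beta_1$, and expand everything via $p=\alpha_1p_1+\alpha_2p_2$, $q=\beta_1q_1+\beta_2q_2$. Since $p_1\star q_1$ is supported on $(-\infty,m+n]$ and $p_2\star q_2$ on $[m+n+2,\infty)$, evaluating the two CDF gaps at the single threshold $k=m+n$ gives $\Delta_1(m+n)=\beta_1+\beta_2u-\beta_1v$ and $\Delta_2(m+n)=\alpha_1+\alpha_2v-\alpha_1u$, where $u=(p_1\star q_2)((-\infty,m+n])$ and $v=(p_2\star q_1)((-\infty,m+n])$ lie in $[0,1]$. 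Hence
\begin{align*}
\Delta_1(m+n)+\Delta_2(m+n)=\alpha_1+\beta_1+u(\beta_2-\alpha_1)+v(\alpha_2-\beta_1)=\alpha_1+\beta_1+(u+v)\bigl(1-\alpha_1-\beta_1\bigr).
\end{align*}
The key algebraic fact, which your proposal never identifies, is that the two cross-term coefficients $\beta_2-\alpha_1$ and $\alpha_2-\beta_1$ are \emph{equal} (both are $1-\alpha_1-\beta_1$), so they collapse into one term whose sign you case-split on: if $\alpha_1+\beta_1\le 1$ the sum is at least $\alpha_1+\beta_1\ge\alpha+\beta$; otherwise, taking the worst case $u=v=1$, it is at least $2-\alpha_1-\beta_1=\alpha_2+\beta_2\ge\alpha+\beta$. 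Multiplying by $2$ gives the lemma. Without this cancellation and the two-case analysis there is no proof, so as written the proposal has a genuine gap, even though the scaffolding around it is correct.

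For comparison, the paper performs the same bookkeeping without CDFs: it first bounds the sum of the two $\ell_1$ norms below by the $\ell_1$ norm of the single signed measure $q\star p_1-q\star p_2+p\star q_1-p\star q_2$, expands it as $(\alpha_1+\beta_1)\,p_1\star q_1-(\alpha_2+\beta_2)\,p_2\star q_2$ plus the two cross terms, and uses the reverse triangle inequality together with the disjointness of $\supp(p_1\star q_1)$ and $\supp(p_2\star q_2)$ to obtain $2-2\lvert 1-(\alpha_1+\beta_1)\rvert$, which is exactly the two-case bound above. One small correction to your prose: the total variation distance is twice the supremum of $\lvert\mu(A)-\nu(A)\rvert$ over \emph{all} sets $A$, not over half-lines, so your displayed relations are correct only as inequalities (which is all you need), not as the equalities your text claims.
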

\begin{proof}
In appendix \ref{epi_niid_app}.
\end{proof}

\begin{lemma}\label{pinsker_niid1}
Assume that the hypotheses of Lemma \ref{l1_gap_niid} hold and let $H(p)=c$ and $H(q)=d$. Then
\begin{align*}
H(p \star q) -d \geq  \frac{\alpha ^2}{2\ln(2)}  \| q \star p_1 - q \star p_2 \|_1^2,\\
H(p \star q) -c \geq  \frac{\beta ^2}{2\ln(2)}  \| p \star q_1 - p \star q_2 \|_1^2,
\end{align*}
\end{lemma}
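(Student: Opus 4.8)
The plan is to follow the route of Lemma~\ref{pinsker_iid}: write $p\star q$ as a two-component mixture, recognize the entropy created by mixing as a mutual information, and lower-bound it by Pinsker's inequality. I prove the first displayed inequality; the second is obtained by interchanging the roles of $p$ and $q$.

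Set $a := p((-\infty,m])$, so that the hypothesis of Lemma~\ref{l1_gap_niid} reads $a\in[\alpha,1-\alpha]$ and $p = a\,p_1 + (1-a)\,p_2$; convolving with $q$,
\[
p\star q \;=\; a\,(p_1\star q) + (1-a)\,(p_2\star q).
\]
Introduce a Bernoulli variable $B$ with $\pp(B=1)=a$ and a variable $Z$ with $Z\mid\{B=1\}\sim p_1\star q$ and $Z\mid\{B=2\}\sim p_2\star q$, so $Z\sim p\star q$. (One may assume $H(p\star q)<\infty$, otherwise the claim is trivial since its right-hand side is at most $\tfrac{1}{2\ln 2}$; because $p_1,p_2$ have disjoint supports, all entropies below are then finite.) The identity $H(Z)=H(Z\mid B)+I(B;Z)$ becomes
\[
H(p\star q) \;=\; a\,H(p_1\star q) + (1-a)\,H(p_2\star q) + I(B;Z),
\]
and since $p_i\star q$ is the law of $U_i+V$ with $V\sim q$ independent of $U_i\sim p_i$, we have $H(p_i\star q)\ge H(U_i+V\mid U_i)=H(q)=d$. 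Hence $H(p\star q)-d\ge I(B;Z)$, and it remains to bound $I(B;Z)$ from below.

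Writing $I(B;Z)=a\,D(p_1\star q\,\|\,p\star q)+(1-a)\,D(p_2\star q\,\|\,p\star q)$ and using $p\star q-p_1\star q=(1-a)(p_2\star q-p_1\star q)$ together with $p\star q-p_2\star q=a(p_1\star q-p_2\star q)$, Pinsker's inequality (in bits, constant $\tfrac{1}{2\ln 2}$) gives $D(p_1\star q\,\|\,p\star q)\ge\frac{(1-a)^2}{2\ln 2}\Delta^2$ and $D(p_2\star q\,\|\,p\star q)\ge\frac{a^2}{2\ln 2}\Delta^2$, where $\Delta:=\|q\star p_1-q\star p_2\|_1$. Adding, $I(B;Z)\ge\frac{a(1-a)}{2\ln 2}\Delta^2$. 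On $[\alpha,1-\alpha]$ the concave map $a\mapsto a(1-a)$ is minimized at the endpoints, with value $\alpha(1-\alpha)\ge\alpha^2$ since $\alpha<\tfrac12$; this yields the first inequality. The second follows verbatim after interchanging $p$ and $q$ and setting $b:=q((-\infty,n])\in[\beta,1-\beta]$.

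The only step that needs a little care is the reduction $H(p\star q)-d\ge I(B;Z)$, which combines the exact mixture-entropy identity (average component entropy plus the ``mixing gain'' $I(B;Z)$) with the elementary convolution bound $H(p_i\star q)\ge H(q)$; everything after that is Pinsker applied twice and concavity of $a\mapsto a(1-a)$, so I do not anticipate a real obstacle.
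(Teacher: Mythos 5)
Your proof is correct, and it reaches the conclusion by a route that differs in its key mechanism from the paper's. The paper (inheriting the argument of Lemma~\ref{pinsker_iid}) studies $f(x)=H(x\nu_1+(1-x)\nu_2)$ with $\nu_1=p_1\star q$, $\nu_2=p_2\star q$, proves concavity, locates the maximizer $x^\star$, and uses the sign of $f'(\alpha_1)$ to show $f(\alpha_1)\geq H(\nu_2)+D(\nu_2\,\|\,\mu_{\alpha_1})$ (or the symmetric bound with $\nu_1$, depending on which side of $x^\star$ the point $\alpha_1$ falls), so it only ever exploits one of the two KL terms and obtains the constant $\min(\alpha_1,1-\alpha_1)^2\geq\alpha^2$. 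You instead invoke the exact mixture identity $H(\mu_{a})=aH(\nu_1)+(1-a)H(\nu_2)+I(B;Z)$ together with $H(\nu_i)=H(p_i\star q)\geq H(q)=d$, and then apply Pinsker to \emph{both} divergences in $I(B;Z)=aD(\nu_1\|\mu_a)+(1-a)D(\nu_2\|\mu_a)$. This buys you two things: you avoid the derivative computation and the case analysis on $x^\star$ entirely (and with it the paper's appeal to $f'(0)=+\infty$, $f'(1)=-\infty$, which rests on $\nu_1,\nu_2$ having non-nested supports --- a property that is not automatic after convolving with $q$, though the paper's argument can be repaired without it); and you get the slightly stronger intermediate constant $\alpha_1(1-\alpha_1)\geq\alpha(1-\alpha)\geq\alpha^2$. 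Two cosmetic points: your $B$ takes values $1$ and $2$ after being introduced as Bernoulli with $\pp(B=1)=a$, and the finiteness of $H(\nu_1),H(\nu_2)$ given $H(p\star q)<\infty$ follows already from concavity of entropy --- the disjointness of the supports of $p_1,p_2$ is not what is needed there (indeed $p_1\star q$ and $p_2\star q$ need not have disjoint supports).
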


\begin{proof}
Proof in appendix \ref{epi_niid_app}.
\end{proof}

\begin{lemma}\label{pinsker_niid2}
Let $p$ and $q$ be probability distributions over $\mZ$ with $H(p)=c$, $H(q)=d$, $\|p\|_\infty=x$ and $\|q\|_\infty=y$. Then 
\begin{align*}
2 H(p \star q) -c -d &\geq l(x,y),
\end{align*}
where $$l(x,y)=\min _{(a,b)\in T(x,y)} \frac{(1-x)^2 a^2 + (1-y)^2 b^2}{8 \ln(2)},$$ and $T(x,y)$ is a subset of $(a,b)\in \mR_+^2$ parameterized by $(x,y)\in [0,1]\times [0,1]$ and given by the following inequalities 
\begin{align*}
a \geq (4y-2)^+, b\geq (4x-2)^+, a+b \geq 2-x-y.
\end{align*}
Moreover, $l(x,y)$ is a continuous function of $(x,y)$, $l(x,y)\geq 0$ and $l(x,y)=0$ if and only if $(x,y)=(1,1)$.
\end{lemma}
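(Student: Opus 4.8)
The plan is to adapt the non-spiky argument behind Lemma~\ref{epi_iid_nonspiky} to the two-distribution setting: split each of $p$ and $q$ about a suitable integer into two pieces with non-interlacing supports, apply the $\ell_1$-gap estimates of Lemmas~\ref{l1_gap_niid} and~\ref{l1_gap_iid_pinf}, feed them into the Pinsker-type bound of Lemma~\ref{pinsker_niid1}, and then recognise the two $\ell_1$-distances that appear as a feasible point of $T(x,y)$, so the bound collapses to $l(x,y)$. The degenerate cases $x=1$ and $y=1$, where the splitting breaks down, I would dispose of separately.

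\noindent\emph{Main case $x,y\in(0,1)$.} Since any probability distribution on $\mZ$ has strictly positive $\ell_\infty$ norm, this is the only case besides $x=1$ or $y=1$. Set $\alpha=\frac{1-x}{2}$, $\beta=\frac{1-y}{2}$, so $0<\alpha,\beta<\tfrac12$. As $m\mapsto p((-\infty,m])$ increases from $0$ to $1$ in jumps of size at most $x=1-2\alpha$, the least $m$ with $p((-\infty,m])\ge\alpha$ satisfies $\alpha\le p((-\infty,m])<\alpha+x=1-\alpha$; choose $n$ for $q$ the same way with $\beta$. With $p_1,p_2,q_1,q_2$ the scaled restrictions from Lemma~\ref{l1_gap_niid}, put $a=\|q\star p_1-q\star p_2\|_1$ and $b=\|p\star q_1-p\star q_2\|_1$. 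Lemma~\ref{l1_gap_niid} gives $a+b\ge2(\alpha+\beta)=2-x-y$; Lemma~\ref{l1_gap_iid_pinf} applied to $q$ against the non-overlapping pair $(p_1,p_2)$ gives $a\ge(4y-2)^+$, and symmetrically $b\ge(4x-2)^+$, so $(a,b)\in T(x,y)$. Adding the two inequalities of Lemma~\ref{pinsker_niid1} and using $\alpha^2=\tfrac{(1-x)^2}{4}$, $\beta^2=\tfrac{(1-y)^2}{4}$,
\[
2H(p\star q)-c-d\ \ge\ \frac{\alpha^2}{2\ln(2)}\,a^2+\frac{\beta^2}{2\ln(2)}\,b^2\ =\ \frac{(1-x)^2a^2+(1-y)^2b^2}{8\ln(2)}\ \ge\ l(x,y).
\]

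\noindent\emph{Boundary cases and the properties of $l$.} If $x=1$ then $p$ is a point mass, so $c=0$ and $p\star q$ is a translate of $q$; hence $2H(p\star q)-c-d=H(q)\ge\log_2(1/y)\ge\frac{1-y}{\ln(2)}\ge\frac{(1-y)^2}{2\ln(2)}$, and a direct evaluation of the minimisation shows $l(1,y)=\frac{(1-y)^2}{2\ln(2)}$ (the optimum is at the smallest admissible $b$, namely $b=2$, the constraint $a+b\ge1-y$ being then automatic), which settles this case; $y=1$ is symmetric, and when $x=y=1$ both sides are $0$. For the claimed properties: $T(x,y)$ is closed and non-empty and the objective is jointly continuous, so the minimum is attained (by coercivity in $(a,b)$ when $x,y<1$, and by direct computation otherwise); the constraint $a+b\ge2-x-y>0$ rules out $(a,b)=(0,0)$, which forces $l(x,y)>0$ when $x,y<1$, while the boundary formula shows $l>0$ at $x=1$ or $y=1$ unless $x=y=1$, where every coefficient vanishes and $l=0$. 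Continuity of $(x,y)\mapsto l(x,y)$ then follows from a standard parametric-optimisation argument, or from writing out the KKT conditions of the quadratic program (at the optimum $a=(4y-2)^+$ or $(1-x)^2a=(1-y)^2b$ on the face $a+b=2-x-y$), which also makes the formula explicit.

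The splitting step and the Pinsker step are immediate from the cited lemmas, and matching the $\ell_1$-distances to the constraints of $T(x,y)$ is just bookkeeping. The one place that needs genuine care is the behaviour of the minimisation defining $l$ as the feasible polyhedron $T(x,y)$ moves and the coefficients $(1-x)^2,(1-y)^2$ degenerate at the corner $(1,1)$ — but that is elementary convex analysis and, once done, gives continuity, nonnegativity and the vanishing locus for free.
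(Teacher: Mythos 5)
Your proposal is correct and follows essentially the same route as the paper: split $p$ and $q$ at suitable points, get $a+b\ge 2-x-y$ from Lemma~\ref{l1_gap_niid} and $a\ge(4y-2)^+$, $b\ge(4x-2)^+$ from Lemma~\ref{l1_gap_iid_pinf}, sum the two Pinsker-type bounds of Lemma~\ref{pinsker_niid1}, and observe $(a,b)\in T(x,y)$; the positivity/vanishing analysis of $l$ is also the same. Your explicit handling of the degenerate cases $x=1$ or $y=1$ (where $\alpha=0$ or $\beta=0$ and the hypotheses of Lemmas~\ref{l1_gap_niid} and~\ref{pinsker_niid1} technically fail) is a small but genuine tightening of a point the paper passes over silently.
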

\begin{proof}
Proof in appendix \ref{epi_niid_app}.
\end{proof}
\vspace{2mm}
{\bf Proof of Theorem \ref{epi_niid_Z}:}  Let $x=\|p\|_\infty$ and $y=\|q\|_\infty$. It is easy to check that $x\geq 2^{-c},y\geq 2^{-d}$. Using Lemma \ref{niid_spiky_bound} and Lemma \ref{pinsker_niid2}, we obtain that
\begin{align*}
 H(p\star q) -\frac{c+d}{2} \geq s(c,d),
\end{align*}
where $s(c,d)$ is given by 
 \begin{align*}
& \frac{1}{2} \min _{(x,y)\in R(c,d)}   \big \{(d x - h_2(x) + c y -h_2(y))\ \vee \ l(x,y)  \big \},
\end{align*}
for $R(c,d)=[2^{-c},1]\times [2^{-d},1]$. We will use a simpler lower bound given by
\begin{align*}
g(c,d)=& \frac{1}{2} \min _{(x,y)\in R}   \big \{(d x - h_2(x) + c y -h_2(y)) \vee l(x,y)  \big \},
\end{align*}
where $R=[0,1]\times [0,1]$. It is easy to see that $g(c,d)$ is a continuous  function. It is also a doubly increasing function of its arguments. To prove the last part, notice that the $l(x,y)$ in the definition of $g$ is strictly positive except for $(x^\star,y^\star)=(1,1)$. But $\lim _{(x,y)\to (1,1)} dx-h_2(x) + c y - h_2(y)=c+d$, which is strictly positive unless $c=d=0$. Therefore, for $(c,d)\neq (0,0)$, $g(c,d)>0$.

The function $d x -h_2(x) + c y-h_2(y)$ is an increasing function of $(c,d)$ over $R$, which implies that $g(c,d)$ must be an increasing function of $(c,d)$. Also, using an argument similar to what we had in the proof of Theorem \ref{epi_iid_Z}, it is possible to show that for high values of $c$ and $d$, the outer minimum in the definition of $g$ is achieved in a small enough neighborhood of $(0,0)$, namely, $[0,\epsilon]\times [0,\epsilon]$ for some small enough $\epsilon>0$. From the continuity of $l(x,y)$, it can be shown that in this range the value of $l(x,y)$ is very close to 
\begin{align*}
\min_{(a,b): a, b\geq 0, a+b\geq 2} \frac{a^2+b^2}{8\ln(2)}=\frac{1}{4\ln(2)}.
\end{align*}
This implies that $$\lim_{(c,d)\to (\infty,\infty)} g(c,d)=\frac{1}{8\ln(2)}.$$

This completes the proof of the EPI result for the general independent  case.

\subsection{Conditional EPI}
In this part, we will prove the EPI result for the conditional case, where we try to find a lower bound for the conditional entropy gap, $H(X+X'|Y,Y')-H(X|Y)$, for i.i.d. $\mZ$-valued pairs $(X,Y)$ and $(X',Y')$ assuming that $H(X|Y)=c$, for some positive number $c$. Notice that as $Y$ and $Y'$ only appear in the conditioning, we do not lose generality by assuming them to be $\mZ$-valued. Let us denote the probability distribution of $Y$ by $q$ then the conditional entropy gap can be written as 
\begin{align*}
\sum_{i,j \in \mZ} q_i q_j H(p_i \star p_j) - c,
\end{align*}
where $p_i$ is the conditional distribution of $X$ given  $Y=i$.

Notice that we are interested to the infimum of this gap over all possible $q,p_i$ satisfying $\sum_{i\in\mZ}q_i H(p_i)=c$. Even if the minimizing $q$ exists, it may not be finitely supported and in general, finding the corresponding gap requires an infinite dimensional constrained optimization. 

To cope with this problem, we will show that it is possible to restrict the support size of $q$ to $2$ provided that instead of the i.i.d. case we consider the general independent and non identically distributed one. Of course, at the end we get a looser bound at the price of simplifying the problem.

To be more specific, let $(X,Y)$ and $(X',Y')$ be independent $\mZ$-valued pairs with $H(X|Y)=H(X'|Y')=c$ and let $t_n(c)$ be the infimum of $H(X+X'|Y,Y')-c$ over all $(X,Y),(X',Y')$  having a conditional entropy equal to $c$ with $Y$ and $Y'$ having a support size at most $n$. Also, assume that $t_\infty(c)$ is the corresponding infimum when there is no constraint on the support size. We first prove the following lemma. 

\begin{lemma}\label{finite_support}
For every $n\geq 2$, $t_\infty(c)=t_n(c)$.
\end{lemma}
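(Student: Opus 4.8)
The plan is to prove the two inequalities separately. The inequality $t_\infty(c)\le t_n(c)$ is immediate: every configuration of pairs $(X,Y),(X',Y')$ with $|\supp Y|,|\supp Y'|\le n$ and $H(X|Y)=H(X'|Y')=c$ is also admissible in the definition of $t_\infty(c)$, which is therefore an infimum over a larger set. Since in addition $t_2(c)\ge t_3(c)\ge\cdots\ge t_\infty(c)$, it suffices to prove $t_2(c)\le t_\infty(c)$, and for this it is enough to show that any configuration with conditional entropies equal to $c$ and gap $H(X+X'|Y,Y')-c\le t_\infty(c)+\e$ can be replaced by one in which $Y$ and $Y'$ each take at most two values, the conditional entropies are still exactly $c$, and the gap has not increased (up to an error absorbed into $\e$).

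The heart of the argument is a planar Carath\'eodory / supporting-line step, applied first to $Y$ and then to $Y'$. First I would freeze $(X',Y')$. Writing $q$ for the law of $Y$, $p_i$ for the law of $X$ given $Y=i$, $c_i=H(p_i)$ (so $\sum_i q_ic_i=c$), $q'$ for the law of $Y'$, and $f_i=\sum_j q'_j\,H(p_i\star p'_j)$, the gap equals $\sum_i q_if_i-c$, which is linear in $q$, and the constraint $\sum_i q_ic_i=c$ is linear as well. From $H(B)\le H(A+B)\le H(A)+H(B)$ for independent $A,B$ one gets $c\le f_i\le c_i+c$, so the points $P_i=(c_i,f_i)\in\mR^2$ all lie in the strip $\{(a,b):c\le b\le a+c\}$, bounded below by the line $b=c$. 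The current configuration exhibits $(c,v):=\sum_i q_iP_i$, where $v-c$ is the current gap, as a point of $\overline{\mathrm{conv}}\{P_i\}$. I would intersect this convex set with the vertical line $\{a=c\}$, pass to the lowest point $(c,v^\star)$ of the resulting (bounded-below) vertical section, so $v^\star\le v$, and use a supporting line at the boundary point $(c,v^\star)$ to write it as $\lambda P_{i_1}+(1-\lambda)P_{i_2}$ for two indices $i_1,i_2$ in $\supp q$. Replacing $Y$ by the two-valued variable with $\pp(Y=i_1)=\lambda$, $\pp(Y=i_2)=1-\lambda$ and conditional laws $p_{i_1},p_{i_2}$ then keeps $H(X|Y)=\lambda c_{i_1}+(1-\lambda)c_{i_2}=c$ and produces gap $\lambda f_{i_1}+(1-\lambda)f_{i_2}-c=v^\star-c\le v-c$. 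I would then freeze this new $(X,Y)$ and run the identical argument with the two pairs interchanged (with $g_j=\sum_i q_i\,H(p_i\star p'_j)$, the inner sum now over the two surviving values of $i$) to reduce $\supp Y'$ to size two, again without increasing the gap or disturbing $H(X'|Y')=c$. The result is admissible for $t_2(c)$ with gap $\le t_\infty(c)+O(\e)$.

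I expect the main obstacle to be the case in which the starting $Y$ (or $Y'$) has countably infinite support: then $\{P_i\}$ is infinite, $\mathrm{conv}\{P_i\}$ need not be closed, and the point $(c,v^\star)$ on the lower boundary of $\overline{\mathrm{conv}}\{P_i\}$ may only be a limit of two-point combinations rather than exactly one, and such a perturbation would also move the first coordinate off $c$ and hence break the entropy constraint. The plan to handle this is to first approximate the given configuration by one with finitely supported $Y$ and $Y'$ — truncating $q$ and $q'$ to large finite windows and renormalizing — at the cost of $\e$ in the gap, and to restore the conditional entropy to exactly $c$ by a small adjustment of one conditional law; once $\supp q$ is finite, $\overline{\mathrm{conv}}\{P_i\}$ is a polygon and the two-point reduction above is exact, with first coordinate exactly $c$. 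Checking that these truncation, renormalization, and adjustment errors stay within $O(\e)$ and preserve admissibility is the one genuinely technical point; everything else is the elementary two-dimensional convex geometry sketched above.
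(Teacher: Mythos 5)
Your proposal is correct and takes essentially the same route as the paper: both arguments reduce the question to two-dimensional convex geometry on the points $\bigl(H(p_i),\sum_j q'_j H(p_i\star p'_j)\bigr)$, re-represent the constrained barycenter as a combination of at most two of them without increasing the objective or disturbing $H(X|Y)=c$, and then repeat for $Y'$ by symmetry. The only differences are cosmetic — the paper reaches two points via Carath\'eodory followed by a vertex argument for a small linear program where you use a supporting line directly, and you are in fact more careful than the paper about the countably-infinite-support case, which the paper handles by silently applying Carath\'eodory to a countable convex combination.
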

\begin{proof}
Obviously, $t_n(c) \geq t_{\infty}(c)$. Moreover, given any $\epsilon>0$ there is an $\epsilon$-optimal independent pair $(X,Y)$ and $(X',Y')$ such that $$H(X+X'|Y,Y')-c \leq t_\infty(c)+\epsilon.$$ Let $q,q'$ denote the distribution of $Y,Y'$ and let $p_i,p'_j$ be the conditional distribution of $X,X'$ given $Y=i,Y'=j$. Let  
\begin{align*}
V=\{\cv_{ij} \in \mR^3:\cv_{ij}=(H(p_i\star p'_j),H(p_i),H(p'_j)), \, i,j \in \mZ \}.
\end{align*} 
It is easy to see that $$\sum_{i,j \in \mZ} q_i q'_j \cv_{ij}=(H(X+X'|Y,Y'),c,c):=\cch,$$
which implies that the three dimensional vector $\cch:=(H(X+X'|Y,Y'),c,c)$ can be written as a convex combinations of the vectors $\cv_{ij}\in V$ with weights $q_iq'_j$. Let $\cv_i=\sum_j q'_j \cv_{ij}$. Then we have $\sum _i q_i \cv_i=\cch$. Notice that the second component of $\cv_i$ is equal to $H(p_i)$. Also, the third component is equal to $c$ independent of $i$, which implies that there are only two components depending on $i$ in $\cv_i$. Therefore, by Carath\`eodory theorem, it is possible to write $\cch$ as a convex combination of at most three $\cv_i, i\in \mZ$, which without loss of generality, we can assume to be $\{\cv_0,\cv_1,\cv_2\}$. In other words, there are positive $\gamma_i, i=0,1,2$, $\sum_{i=0}^2 \gamma_i=1$ and $\cch=\sum _{i=0}^2 \gamma_i \cv_i$. Also, note that if we change the distribution of $Y$ from $q$ to $\gamma$, the resulting $(X,Y),(X',Y')$ is again an $\epsilon$-optimal solution. Now, we claim that  we can  simplify the problem further and find a probability triple $\psi=(\psi_0,\psi_1,\psi_2)$ with at most $2$ non-zero elements such that $\sum_{i=0}^2 \psi_i H(p_i)=c$ and at the same time $$\sum_{i=0}^2 \psi _i \cv^{(1)}_i \leq \sum_{i=0}^2 \gamma_i \cv^{(1)}_i= \sum_{i=0}^2 q_i \cv^{(1)}_i =H(X+X'|Y,Y'),$$ where $\cv^{(1)}_i$ denotes the first coordinate of the vector $\cv_i$. This implies that if we replace the distribution $\gamma$  for $Y$ by $\psi$, which has a support of size $2$, we get a lower $H(X+X'|Y,Y')$. 

To prove the claim, let us consider the following optimization problem 
\begin{align*}
\text{minimize } \sum_{i=0}^2 \psi_i \cv^{(1)}_i \text{ s.t. } \left \{ \begin{array}{ll} \sum_{i=0}^2 \psi_i=1, \\ \vspace{-3.5mm} \\ \sum_{i=0}^2 \psi_i H(p_i)=c, \\ \psi_i \geq 0.\end{array} \right.
\end{align*}
First of all, notice that as $\sum_{i=0}^2 \gamma_i H(p_i)=c$, $\gamma$ is in the feasible set. Therefore, the feasible set is a non-empty subset of the three dimensional probability simplex. Also, as the objective function is linear in $\psi$, the optimal point must be at the edge of the feasible set which implies that there is an optimal solution with at most two non-zero components and this proves the claim.

By symmetry, we can apply the same argument to the probability distribution $q'$ of $Y'$ to get an $\epsilon$-optimal solution in which the support of both $q$ and $q'$  has at most size $2$. Hence, this implies that for any $\epsilon>0$ and any $n\geq 2$, $t_n(c)\leq t_2(c) \leq t_\infty(c)+\epsilon$. In other words, $t_n(c)=t_\infty(c)$. This completes the proof.
\end{proof}

Lemma \ref{finite_support} allows us to simplify finding the lower bound. However, we might get a looser bound because we relaxed the condition that $(X,Y)$ and $(X',Y')$ be identically distributed. From now on, we will assume that $Y$ and $Y'$ are binary valued random variables.
We will use the following two lemmas to get a lower bound for the conditional entropy gap.

\begin{lemma}\label{small_alpha_beta}
Let $(X,Y),(X',Y')$ be an independent pair of random variables, where $Y$ and $Y'$ are binary valued with $\pp(Y=0)=\alpha$, $\pp(Y'=0)=\beta$ and $H(X|Y)=H(X'=Y')=c$. Then
$$H(X+X'|Y,Y') -c \geq g(c,c) - \min\{h_2(\alpha),h_2(\beta)\},$$ where $g$ is the same function as in Theorem \ref{epi_niid_Z}.
\end{lemma}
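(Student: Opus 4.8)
The plan is to expand the conditional entropy gap explicitly in terms of the conditional distributions and then apply the non-i.i.d.\ bound of Theorem \ref{epi_niid_Z} term by term. Write $p_0,p_1$ for the conditional distribution of $X$ given $Y=0,1$, and $p'_0,p'_1$ for the conditional distribution of $X'$ given $Y'=0,1$. Then $H(X|Y)=\alpha H(p_0)+(1-\alpha)H(p_1)=c$ and similarly $H(X'|Y')=\beta H(p'_0)+(1-\beta)H(p'_1)=c$, while
\begin{align*}
H(X+X'|Y,Y') = \alpha\beta\, H(p_0\star p'_0) + \alpha(1-\beta)\, H(p_0\star p'_1) + (1-\alpha)\beta\, H(p_1\star p'_0) + (1-\alpha)(1-\beta)\, H(p_1\star p'_1).
\end{align*}

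Next I would apply Theorem \ref{epi_niid_Z} to each of the four convolution terms: $H(p_i\star p'_j) \geq \tfrac12\big(H(p_i)+H(p'_j)\big) + g(H(p_i),H(p'_j))$. Since $g$ is doubly-increasing and nonnegative, the cleanest route is to lower bound each $g(H(p_i),H(p'_j))$ by $g(\min_i H(p_i),\min_j H(p'_j))$; but this is too crude because it does not see $c$. Instead, I would discard the terms where one argument is large and keep a single controlled term, or better, use convexity-type reasoning on the weighted sum of the linear parts. The linear part contributes $\alpha\beta\cdot\tfrac12(H(p_0)+H(p'_0)) + \cdots = \tfrac12\big(\alpha H(p_0)+(1-\alpha)H(p_1)\big) + \tfrac12\big(\beta H(p'_0)+(1-\beta)H(p'_1)\big) = \tfrac{c+c}{2} = c$, so the gap reduces exactly to the weighted average of the four $g$-terms. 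It then remains to show
\begin{align*}
\alpha\beta\, g(H(p_0),H(p'_0)) + \alpha(1-\beta)\, g(H(p_0),H(p'_1)) + (1-\alpha)\beta\, g(H(p_1),H(p'_0)) + (1-\alpha)(1-\beta)\, g(H(p_1),H(p'_1)) \geq g(c,c) - \min\{h_2(\alpha),h_2(\beta)\}.
\end{align*}

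The main obstacle is this last inequality, since $g$ is neither jointly concave nor of a form that trivially commutes with the weighted average; the presence of the $-\min\{h_2(\alpha),h_2(\beta)\}$ correction strongly suggests the argument goes through an entropy manipulation rather than pure convexity. The natural idea is to compare $H(X+X'|Y,Y')$ with $H(X+X')$ (unconditioned): by subadditivity-type bounds $H(X+X') \leq H(X+X'|Y,Y') + H(Y,Y') \leq H(X+X'|Y,Y') + h_2(\alpha) + h_2(\beta)$, but we need the sharper one-sided version. More precisely, I expect one uses that $X+X'$ given only $Y$ (integrating out $Y'$) still has the right marginal structure, and that revealing a binary variable changes entropy by at most its binary entropy; applying Theorem \ref{epi_niid_Z} to the pair $(X \text{ given } Y, X')$ — where $X'$ unconditioned has entropy $\geq$ something and $X$ given $Y$ has entropy $c$ on average — and then accounting for the cost $\min\{h_2(\alpha),h_2(\beta)\}$ of having conditioned on the smaller-entropy label. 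The delicate point is keeping the doubly-increasing monotonicity of $g$ pointed the right way so that averaging $H(p_i)$ back to $c$ only helps; I would handle this by first reducing, via monotonicity of $g$ in each argument and the constraint $\alpha H(p_0)+(1-\alpha)H(p_1)=c$, to the extreme configuration where one of $H(p_0),H(p_1)$ is as small as possible, and similarly for the primed distribution, and then verify the resulting one- or two-parameter inequality directly against the formula for $g$.
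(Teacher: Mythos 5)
Your opening reduction is correct: expanding $H(X+X'|Y,Y')-c=\sum_{i,j}q_iq'_j\bigl(H(p_i\star p'_j)-\tfrac{1}{2}(H(p_i)+H(p'_j))\bigr)$ and applying Theorem \ref{epi_niid_Z} termwise gives $H(X+X'|Y,Y')-c\geq\sum_{i,j}q_iq'_j\,g(H(p_i),H(p'_j))$; this is in fact exactly how the paper proves the companion Lemma \ref{large_alpha_beta}, where one keeps the single term with $H(p_i),H(p'_j)\geq c$ and pays the factor $q_iq'_j\geq\delta^2$. The problem is that the inequality you then need, $\sum_{i,j}q_iq'_j\,g(H(p_i),H(p'_j))\geq g(c,c)-\min\{h_2(\alpha),h_2(\beta)\}$, is false. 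Take $\beta=\tfrac12$ with $H(p'_0)=H(p'_1)=c$, and let $\alpha\to 0$ with $p_1$ a point mass ($H(p_1)=0$) and $H(p_0)=c/\alpha$, so that $\alpha H(p_0)+(1-\alpha)H(p_1)=c$. Since $g$ is doubly increasing with limit $\tfrac{1}{8\ln 2}$, it is bounded, so the left side tends to $g(0,c)$, while the right side tends to $g(c,c)$; from the explicit formula one checks $g(0,c)\leq\tfrac12\,l(0,\tfrac12)\approx 0.04$ for every $c$, whereas $g(c,c)\to\tfrac{1}{8\ln 2}\approx 0.18$. The lemma survives in this example only because the true term gap $H(p_1\star p'_j)-\tfrac12(H(p_1)+H(p'_j))=\tfrac{c}{2}$ is far larger than $g(0,c)$: when $H(p_i)$ and $H(p'_j)$ are very unbalanced, the termwise bound discards the slack $\tfrac12\lvert H(p_i)-H(p'_j)\rvert$ coming from $H(p_i\star p'_j)\geq\max\{H(p_i),H(p'_j)\}$ (cf.\ Remark \ref{rem:tightbound}), and that slack is exactly what your route throws away. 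So the two-parameter inequality you propose to "verify directly against the formula for $g$" at the end cannot be verified.

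The idea you are missing---and which your ``entropy manipulation'' paragraph circles around without landing on---is translation invariance. The quantities $H(X|Y)$, $H(X'|Y')$ and $H(X+X'|Y,Y')$ are unchanged if each conditional law $p_i,p'_j$ is translated by an arbitrary integer, but $H(X)$, $H(X+X')$ and $I(X;Y)$ are not. The paper translates $p_1$ far away from $p_0$ so that $Y$ becomes an almost deterministic function of $X$, i.e.\ $H(Y|X)<\epsilon$. Then the exact identity $H(X+X'|Y,Y')-c=H(X+X')-H(X)-I(X+X';Y,Y')+I(X;Y)$, combined with $I(X+X';Y,Y')\leq H(Y)+H(Y')$ and $I(X;Y)= H(Y)-H(Y|X)\geq H(Y)-\epsilon$, makes the $H(Y)$ terms cancel and leaves $H(X+X')-H(X)-h_2(\beta)-\epsilon$, after which a single application of the unconditional Theorem \ref{epi_niid_Z} to the shifted pair (using $H(X),H(X')\geq c$ and monotonicity of $g$) finishes; symmetry then gives the $\min\{h_2(\alpha),h_2(\beta)\}$. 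Without the shifting step there is no way to force $I(X;Y)$ up to $H(Y)$, and the chain does not close.
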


\begin{proof}
Proof in appendix \ref{epi_cond_app}.
\end{proof}

\begin{lemma}\label{large_alpha_beta}
Assume that all of the conditions of Lemma \ref{small_alpha_beta} hold. Suppose there is a $0\leq\delta\leq\frac{1}{2}$ such that $\delta < \alpha,\beta<1-\delta$. Then 
$$H(X+X'|Y,Y')-c\geq \delta^2 g(c,c).$$
\end{lemma}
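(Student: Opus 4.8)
The plan is to reduce the statement to Theorem \ref{epi_niid_Z} applied conditionally, and then to keep only one well-chosen term. Write $p_0,p_1$ for the conditional laws of $X$ given $Y=0,1$ and $q_0,q_1$ for the conditional laws of $X'$ given $Y'=0,1$, and set $a_i=H(p_i)$, $b_j=H(q_j)$. The hypotheses become $\alpha a_0+(1-\alpha)a_1=c$ and $\beta b_0+(1-\beta)b_1=c$, and by independence of the two pairs,
\begin{align*}
H(X+X'\mid Y,Y')=\sum_{i,j\in\{0,1\}}\pp(Y=i)\,\pp(Y'=j)\,H(p_i\star q_j).
\end{align*}

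First I would apply Theorem \ref{epi_niid_Z} to each pair $(p_i,q_j)$, which gives $H(p_i\star q_j)\ge \tfrac12(a_i+b_j)+g(a_i,b_j)$. Substituting this bound, the weighted average of the $\tfrac12(a_i+b_j)$ contributions collapses: the coefficient of $a_0$ is $\alpha\beta+\alpha(1-\beta)=\alpha$, that of $a_1$ is $1-\alpha$, that of $b_0$ is $\beta$, and that of $b_1$ is $1-\beta$, so these terms sum to $\tfrac12\big(\alpha a_0+(1-\alpha)a_1+\beta b_0+(1-\beta)b_1\big)=c$. Hence
\begin{align*}
H(X+X'\mid Y,Y')-c\ \ge\ \sum_{i,j\in\{0,1\}}\pp(Y=i)\,\pp(Y'=j)\,g(a_i,b_j).
\end{align*}

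The key step is then to isolate a single good term rather than trying to control all four. Since $c$ is a convex combination of $a_0$ and $a_1$, at least one of them, say $a_{i^\star}$, satisfies $a_{i^\star}\ge c$; similarly choose $j^\star$ with $b_{j^\star}\ge c$. Because $g$ is nonnegative and doubly-increasing (Theorem \ref{epi_niid_Z}), I can discard every term except the $(i^\star,j^\star)$ one and bound $g(a_{i^\star},b_{j^\star})\ge g(c,c)$, giving $H(X+X'\mid Y,Y')-c\ge \pp(Y=i^\star)\,\pp(Y'=j^\star)\,g(c,c)$. Finally, the hypothesis $\delta<\alpha,\beta<1-\delta$ forces $\pp(Y=i^\star)\in\{\alpha,1-\alpha\}$ and $\pp(Y'=j^\star)\in\{\beta,1-\beta\}$ both to exceed $\delta$, so the prefactor is at least $\delta^2$, which completes the argument.

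I do not expect a genuine obstacle here; the only point that requires care is that one should \emph{not} attempt to lower bound all four $g(a_i,b_j)$ terms simultaneously, since the ``off-diagonal'' conditional entropies can be arbitrarily small and make those $g$-values vanish. Exploiting $g\ge 0$ to retain just the one term in which both conditional entropies are at least $c$ is what makes the estimate go through.
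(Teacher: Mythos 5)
Your proposal is correct and is essentially the paper's own argument: both rest on the identity $H(X+X'|Y,Y')-c=\sum_{k,l}q_k q'_l\bigl(H(p_k\star p'_l)-\tfrac{H(p_k)+H(p'_l)}{2}\bigr)$, nonnegativity of each summand via Theorem \ref{epi_niid_Z}, retention of the single term where both conditional entropies are at least $c$, and the bound $q_{i^\star}q'_{j^\star}\geq\delta^2$. The only difference is cosmetic (you apply the theorem to all four terms before discarding three, whereas the paper discards first), so there is nothing to add.
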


\begin{proof}
Proof in appendix \ref{epi_cond_app}.
\end{proof}
\vspace{2mm}
{\bf Proof of Theorem \ref{epi_iid_cond_Z}:}
The proof follows by combining the results obtained in Lemma \ref{small_alpha_beta} and \ref{large_alpha_beta}. Let $\delta=\min\{\alpha,1-\alpha,\beta,1-\beta\}$. Then $0\leq\delta\leq \frac{1}{2}$ and using Lemma \ref{large_alpha_beta}, we get the lower bound $\delta^2 g(c,c)$. Similarly, from Lemma \ref{small_alpha_beta} and using the fact that $\min\{h_2(\alpha),h_2(\beta)\}=h_2(\delta)$, we get the lower bound $g(c,c)-h_2(\delta)$. Combining the two, we obtain the desired lower bound 
$$\tilde{g}(c)=\min_{\delta\in[0,\frac{1}{2}]} \{ (g(c,c)-h_2(\delta))\ \vee \ \delta^2 g(c,c)\}.$$
The monotonicity of $\tilde{g}$ follows from the monotonicity of $g(c,c)$. Also, notice that $\delta^2 g(c,c)$ is strictly positive unless $\delta=0$ but $\lim_{\delta \to 0} g(c,c)-h_2(\delta)=g(c,c)$, which is strictly positive if $c>0$. Therefore, for $c>0$ we have $\tilde{g}(c)>0$. This completes the proof.

\section{Open problems}\label{open_problems}
\subsection{Closure convexity of the entropy set $\cH$}
As we saw in the proof of Theorem \ref{epi_iid_cond_Z}, the conditional EPI does not directly follow from the unconditional one. In particular, we had to relax the i.i.d. condition in order to get a relatively weak lower bound. In this part, we  propose another approach to the problem which  uses the closure convexity of the entropy set as we will define in a moment. 

\begin{definition}
The entropy set $\cH$ is defined as follows
\begin{align*}
\cH:=\{(H(p\star q)&,H(p),H(q))\in \mR_+^3 : \\
&\text{$p,q$ are probability distributions over $\mZ$}\}.
\end{align*}
\end{definition}

\begin{remark}
Notice that multiple $(p,q)$ pairs may be mapped to the same point in $\cH$ space. For example, if $(p,q)$ is mapped to a point $\cv \in \cH$, then any distribution $(\tilde{p},\tilde{q})$ in which $\tilde{p}$ and $\tilde{q}$ are shifted versions of $p$ and $q$ is also mapped to $\cv$.
\end{remark}

\begin{remark}
Some of the  boundaries of the set $\cH$  trivially follow from the properties of the entropy, i.e., for any $\cv \in \cH$, 
\begin{align*}
\cv^{(1)}& \geq \cv^{(2)},\cv^{(1)} \geq \cv^{(3)},\\
\cv^{(1)} & \leq \cv^{(2)} + \cv^{(3)},
\end{align*}
where $\cv^{(i)}$ denotes the $i$-th coordinate of the vector $\cv$. Also the boundary $\cv^{(1)} = \cv^{(2)} + \cv^{(3)}$ is achievable. To show this, let $\cv^{(2)}, \cv^{(3)} \in \mR_+$ and consider two finite support distributions $p$ and $q$ of support $\{0,1,\dots,M-1\}$ and $\{0,1,\dots,N-1\}$ for appropriate $M$ and $N$ such that $H(p)=\cv^{(2)}$ and $H(q)=\cv^{(3)}$. Now, fix $p$ and define a new distribution $\tilde{q}$ as follows
\begin{align*}
\tilde{q}(i)=\left \{\begin{array}{ll} 0 & \frac{i}{M} \notin \mZ,\\ \vspace{-3.5mm} \\
q(\frac{i}{M}) & \frac{i}{M} \in \mZ.
\end{array} \right.
\end{align*}
It is not difficult to show that $H(\tilde{q})=H(q)=\cv^{(3)}$ and $H(p \star \tilde{q}) = H(p)+H(\tilde{q})=\cv^{(2)}+\cv^{(3)}$.
\end{remark}

We propose the following conjecture about the set $\cH$.
\begin{conj}\label{closure_convexity}
The closure of the set $\cH$ is convex.
\end{conj}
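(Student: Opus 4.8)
## Proof Proposal for Conjecture \ref{closure_convexity}

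The plan is to show that for any two points $\cv, \cv' \in \cH$ and any $\lambda \in [0,1]$, the convex combination $\lambda \cv + (1-\lambda)\cv'$ lies in the closure of $\cH$. The natural device is a \emph{spatial tensorization / concatenation} trick: given probability distributions $(p,q)$ achieving $\cv$ and $(p',q')$ achieving $\cv'$, I would try to build, for large integers $M$, distributions $p_M, q_M$ over $\mZ$ that behave like a $\lambda$-vs-$(1-\lambda)$ mixture at the level of entropies of $p_M$, $q_M$ \emph{and} of the convolution $p_M \star q_M$. The obvious first attempt is to place a (rational-approximation) mixture: let $\lambda \approx a/(a+b)$ and form $p_M$ as the distribution that, with probability $\tfrac{a}{a+b}$, looks like a copy of $p$ confined to a block of width $M$, and with probability $\tfrac{b}{a+b}$ looks like a copy of $p'$ confined to a \emph{disjoint, far-away} block of width $M$; similarly for $q_M$ from $q,q'$, using blocks placed so that the ``$p$-block $+$ $q$-block'' and ``$p'$-block $+$ $q'$-block'' supports are themselves far apart and non-interlacing. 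Because the blocks are widely separated (spacing $\gg M$), $H(p_M) \to \lambda H(p) + (1-\lambda)H(p') + H(\mathrm{Bern}(\lambda))$ — i.e.\ a spurious $h_2(\lambda)$ term appears, and likewise for $q_M$ and for $p_M \star q_M$ the convolution splits into the four cross-blocks.

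The key realization is that the spurious combinatorial entropy terms must be made to \emph{cancel}. In $H(p_M \star q_M)$ the four cross terms carry weights $\lambda^2, \lambda(1-\lambda), (1-\lambda)\lambda, (1-\lambda)^2$, contributing $\lambda^2 H(p\star q) + (1-\lambda)^2 H(p'\star q') + \lambda(1-\lambda)[H(p\star q') + H(p'\star q)] + H(\text{4-point mixture})$, whereas we only want $\lambda H(p\star q) + (1-\lambda)H(p'\star q')$. So the plain mixture does not work directly; the cross terms $H(p\star q')$, $H(p'\star q)$ and the mismatched weights pollute the answer. The fix I would pursue is to first reduce to the case where the \emph{same} mixing is used coherently: replace $q'$ by a shifted copy of $q$ and $p'$ by a shifted copy of $p$ is too strong, so instead I would iterate. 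Concretely, I would take $N$ i.i.d.\ copies and use a \emph{law-of-large-numbers / typicality} argument: let $(P,Q)$ be the pair that selects $(p,q)$ w.p.\ $\lambda$ and $(p',q')$ w.p.\ $1-\lambda$, realized as a long product over $N$ coordinates embedded into $\mZ$ via a mixed-radix (place-value) encoding with rapidly growing bases. Then $\tfrac1N H(P_N) \to \lambda H(p)+(1-\lambda)H(p') + h_2(\lambda)$, but now the crucial point is that over $N$ coordinates one conditions on the \emph{type} of the selection sequence; on the typical type (fraction $\lambda$ of coordinates of the first kind) the selection entropy is subexponential in $N$, so dividing by $N$ kills the $h_2(\lambda)$ term, and the cross terms $H(p\star q')$ etc.\ also wash out because in a product encoding coordinate $i$ of $P_N$ only ever meets coordinate $i$ of $Q_N$ in the convolution — there are no genuine cross-coordinate interactions once the bases grow fast enough that carries are negligible. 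Passing to $\tfrac1N$ of everything and letting $N\to\infty$ (after letting the bases $\to\infty$ for fixed $N$) should yield $\lambda\cv + (1-\lambda)\cv'$ as a limit point of $\cH$.

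Let me restate the core construction cleanly, since it is the heart of the argument. Fix $N$ and a binary string $s \in \{0,1\}^N$ with $\#\{i: s_i=1\} = \lceil \lambda N\rceil$. Choose bases $B_1 \ll B_2 \ll \cdots \ll B_N$ growing fast enough (relative to the supports of $p,p',q,q'$, which we may assume finite by approximation) that no carries occur in the relevant sums. Define $P_N = \sum_{i} B_i \cdot \xi_i$ where $\xi_i \sim p$ if $s_i = 0$ and $\xi_i \sim p'$ if $s_i = 1$, independently; define $Q_N$ analogously from $q,q'$ with the \emph{same} string $s$. Then the $\xi_i$ are recoverable from $P_N$ (no-carry condition), so $H(P_N) = \sum_i H(p \text{ or } p') = N[\lambda H(p) + (1-\lambda)H(p')] + o(N)$ exactly (no $h_2$ term at all, because $s$ is fixed, not random), and similarly $H(Q_N) = N[\lambda H(q)+(1-\lambda)H(q')] + o(N)$ and $H(P_N \star Q_N) = N[\lambda H(p\star q) + (1-\lambda)H(p'\star q')] + o(N)$, the last because coordinate $i$ contributes $H(p\star q)$ or $H(p'\star q')$ according to $s_i$ and nothing else. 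Hence $\tfrac1N(H(P_N\star Q_N), H(P_N), H(Q_N)) \to \lambda\cv + (1-\lambda)\cv'$; but $\tfrac1N$ of an entropy triple is the entropy triple of the ``$\mZ$-valued'' distribution obtained by rescaling, and more carefully one divides by $N$ by noting $\cH$ is a cone up to the scaling that a product of $N$ identical blocks gives entropy $N$ times — actually the cleanest phrasing is that $N\cdot(\lambda\cv+(1-\lambda)\cv') \in \cH$ exactly for every $N$ making $\lambda N$ an integer, hence $(\lambda\cv+(1-\lambda)\cv') \in \tfrac1N\cH$, and one shows $\overline{\cH} \supseteq \tfrac1N \cH$ is \emph{false} in general — so this rescaling step is exactly where care is needed.

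Therefore the main obstacle, and where I expect the real work to lie, is the \textbf{rescaling / normalization step}: the construction above naturally produces $N(\lambda\cv + (1-\lambda)\cv')$ rather than $\lambda\cv+(1-\lambda)\cv'$ itself, because $\cH$ is not scale-invariant (there is no distribution on $\mZ$ with arbitrarily small positive entropy having a prescribed convolution structure — or rather, there is, via spiky distributions, but controlling all three coordinates simultaneously under downscaling is delicate). One route around this: instead of taking $N$ independent blocks, take a single pair $(p,q)$ vs.\ $(p',q')$ but \emph{time-share within one distribution at rate $1/N$}, i.e.\ use a base-$N$ digit to pick which of the two sub-distributions to emit and rely on the conditional-entropy decomposition $H(P_N) = H(\text{digit}) + \sum \Pr[\text{digit}=i] H(\cdot \mid i)$ where now the digit is \emph{random} with the mixing law but its entropy $h_2(\lambda)$ is a fixed $O(1)$ that becomes negligible only after we have \emph{also} scaled up the sub-distributions — which is circular. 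I believe the honest resolution is the typicality argument of the previous paragraph carried out with random $s$ but conditioning on the typical type class: the type-class entropy is $\tfrac12\log N + O(1) = o(N)$, so after dividing by $N$ it vanishes, and simultaneously one must verify that conditioning on the type class perturbs $H(P_N \star Q_N)$ by only $o(N)$ — a uniform continuity / Fano-type estimate. Assembling these $o(N)$ bounds uniformly, and confirming that the limiting triple is genuinely a limit of \emph{honest} elements of $\cH$ (not merely of $\tfrac1N$-scaled ones), is the crux; everything else is the soft block-separation bookkeeping that Lemmas of the type already used in this paper (non-interlacing supports, no-carry additions) handle routinely.
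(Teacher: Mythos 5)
First, a point of order: the paper does not prove this statement. It is explicitly posed as an open problem (Conjecture~\ref{closure_convexity}), and the subsequent theorem on the conditional EPI is proved only \emph{conditionally} on it. So there is no ``paper's proof'' to match; your proposal must stand on its own, and as written it does not close the conjecture.

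Your construction is sound as far as it goes, and it does establish something real: with finite-support truncation and rapidly growing bases $B_1\ll\cdots\ll B_N$ so that no carries occur, the triple produced by the fixed-string construction is exactly $k_0\cv+k_1\cv'$ where $k_0,k_1$ are the numbers of coordinates of each kind. Hence $\overline{\cH}$ is closed under addition and contains all nonnegative \emph{integer} combinations of its elements. But convexity requires the contraction $N(\lambda\cv+(1-\lambda)\cv')\mapsto \lambda\cv+(1-\lambda)\cv'$, and there is no operation on pairs of distributions over $\mZ$ that divides all three coordinates $(H(p\star q),H(p),H(q))$ by $N$ simultaneously; $\cH$ is not a cone, and $\tfrac1N\cH\subseteq\overline{\cH}$ is precisely what you would need to prove. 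The typicality/type-class refinement you propose attacks a different (and minor) nuisance --- the additive $h_2(\lambda)$ or $O(\log N)$ overhead from the selection variable --- which is already absent in your fixed-$s$ version; it does nothing toward the normalization. Likewise the direct mixture $p_\lambda=\lambda p+(1-\lambda)T_Mp'$ with a large shift $T_M$ fails for the reason you yourself identify: the convolution splits into four cross-blocks with weights $\lambda^2,\lambda(1-\lambda),\dots$, polluting the first coordinate with $H(p\star q')$ and $H(p'\star q)$ terms that need not interpolate correctly. You candidly flag the rescaling step as ``the crux,'' and I agree --- but that means the proof is not complete; the conjecture remains open, and what you have is a correct proof that $\overline{\cH}$ is a sub-semigroup of $(\mR_+^3,+)$, not that it is convex.
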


Using this conjecture, we can prove the following lemma, which is a stronger version of the conditional EPI.

\begin{thm}
Assume that Conjecture \ref{closure_convexity} holds. Let $(X,Y)$ and $(X',Y')$ be independent pairs of $\mZ$-valued random variables with $H(X|Y)=c,H(X'|Y')=d$. Then
$$H(X+X'|Y,Y')-\frac{c+d}{2} \geq  g(c,d),$$
where $g$ is the same function as in Theorem \ref{epi_niid_Z}.
\end{thm}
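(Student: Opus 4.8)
The plan is to leverage Conjecture~\ref{closure_convexity} to reduce the conditional statement to the unconditional one from Theorem~\ref{epi_niid_Z}. Write the conditional gap in the form already used in the excerpt: with $q,q'$ the laws of $Y,Y'$ and $p_i,p'_j$ the conditional laws of $X,X'$, we have $H(X+X'\mid Y,Y') = \sum_{i,j} q_i q'_j H(p_i \star p'_j)$, $c = \sum_i q_i H(p_i)$, and $d = \sum_j q'_j H(p'_j)$. The key observation is that the triple $\cch := (H(X+X'\mid Y,Y'),\, c,\, d)$ is a convex combination (with weights $q_i q'_j$) of the vectors $\cv_{ij} := (H(p_i\star p'_j), H(p_i), H(p'_j))$, each of which lies in the entropy set $\cH$. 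Hence $\cch$ lies in the convex hull of $\cH$, and therefore, under Conjecture~\ref{closure_convexity}, in the closure $\overline{\cH}$.

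Next I would transfer the pointwise bound of Theorem~\ref{epi_niid_Z} to the closure. Theorem~\ref{epi_niid_Z} says that every $\cv = (\cv^{(1)},\cv^{(2)},\cv^{(3)}) \in \cH$ satisfies $\cv^{(1)} - \tfrac{1}{2}(\cv^{(2)}+\cv^{(3)}) \geq g(\cv^{(2)},\cv^{(3)})$. Since $g$ is continuous (as noted in the proof of Theorem~\ref{epi_niid_Z}) and the left-hand side is a continuous function of $\cv$, this inequality is preserved under taking limits, so it holds for all $\cv \in \overline{\cH}$ as well. Applying this to $\cch \in \overline{\cH}$ immediately gives
\begin{align*}
H(X+X'\mid Y,Y') - \frac{c+d}{2} \;\geq\; g(c,d),
\end{align*}
which is the claimed bound.

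The one technical point that needs care — and which I expect to be the only real obstacle beyond invoking the conjecture — is that $\cH$ is not bounded and $\cch$ is obtained as an \emph{infinite} convex combination of points of $\cH$, whereas convexity of $\overline{\cH}$ literally gives membership only for finite convex combinations (and then closure). One clean way around this is to first truncate: for any $\epsilon>0$ choose a finitely supported $q,q'$ approximating the original pair well enough that the induced triple $\cch_\epsilon$ is within $\epsilon$ of $\cch$ and still has $\cch_\epsilon^{(2)},\cch_\epsilon^{(3)}$ close to $c,d$; then $\cch_\epsilon$ is a genuine finite convex combination of points of $\cH$, hence in $\overline{\cH}$, so $g$ applies to it, and letting $\epsilon\to 0$ together with continuity of $g$ yields the bound for $\cch$. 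Alternatively, one can first pass to $\cv_i := \sum_j q'_j \cv_{ij}$ exactly as in the proof of Lemma~\ref{finite_support}, apply Carathéodory to reduce to a convex combination of three points, and iterate the argument of that lemma to reduce the supports of $Y$ and $Y'$ to size two before invoking convexity of $\overline{\cH}$; this makes all convex combinations finite from the outset. Either route keeps the argument elementary once Conjecture~\ref{closure_convexity} is granted.
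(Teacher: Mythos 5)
Your proposal is correct and follows essentially the same route as the paper: write $(H(X+X'\mid Y,Y'),c,d)$ as a convex combination of the points $\cv_{ij}\in\cH$, invoke Conjecture~\ref{closure_convexity} to place it in the closure of $\cH$, and transfer the bound of Theorem~\ref{epi_niid_Z} by continuity (the paper does this via an explicit $\epsilon$-approximating point of $\cH$ together with monotonicity and continuity of $g$, which is equivalent to your ``extend the inequality to the closure'' step). Your remark about the convex combination being countably infinite is a technical point the paper glosses over, and either of your fixes works; one can also simply note that the normalized finite partial sums lie in the convex hull of $\cH$ and converge to the target point, since all entropies involved are finite.
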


\begin{proof}
Let us assume that the distribution of $Y,Y'$ is $q,q'$ respectively. Also assume that $p_i,p'_j$ is the distribution of $X,X'$ when $Y=i,Y'=j$. Let $$\cv_{ij}=(H(p_i\star p'_j),H(p_i),H(p'_j)), \ i,j \in \mZ.$$ Notice that $\cv_{ij} \in \cH$. We also have 
\begin{align*}
(H(X+X'|Y,Y'),c,d)=\sum_{i,j \in \mZ} q_i q'_j \cv_{ij},
\end{align*}
which is a convex combination of the vectors $\cv_{ij}$. By the closure convexity of $\cH$, for any $\epsilon>0$ it is possible to find an $\cch \in \cH$ in $\epsilon$-neighborhood of $(H(X+X'|Y,Y'),c,d)$. In other words, for the given $\epsilon>0$, there are two distributions $\mu_1$, $\mu_2$ over $\mZ$ such that  
\begin{align*}
&H(\mu_1\star \mu_2) -\epsilon \leq H(X+X'|Y,Y') \leq H(\mu_1\star \mu_2)+\epsilon,\\
&H(\mu_1) -\epsilon \leq c \leq H(\mu_1)+\epsilon,\\
&H(\mu_2)-\epsilon \leq d \leq H(\mu_2)+\epsilon.
\end{align*}
In particular, this implies that
\begin{align*}
H(X&+X'|Y,Y') -\frac{c+d}{2} \\
&\geq H(\mu_1\star \mu_2)-\frac{c+d}{2} -\epsilon \\
&\geq H(\mu_1 \star \mu_2) -\frac{H(\mu_1) +H(\mu_2)}{2}  - 2\epsilon\\
&\geq g(H(\mu_1),H(\mu_2)) -2\epsilon\\
&\geq g(c-\epsilon,d-\epsilon)-2\epsilon,
\end{align*}
where we used the monotonicity of  $g$ with respect to both arguments. As $\epsilon>0$ is arbitrary and $g$ is a continuous function, it results that $H(X+X'|Y,Y') -\frac{c+d}{2}\geq  g(c,d)$.
\end{proof}
\vspace{1mm}

\begin{remark}
In the case that $(X,Y)$ and $(X',Y')$ are i.i.d. pairs with $H(X|Y)=H(X'|Y')=c$, this result reduces to $$H(X+X'|Y,Y')-c \geq  g(c,c),$$ which is tighter than the bound (\ref{cond_epi_formula}) obtained in Theorem \ref{epi_iid_cond_Z}.
\end{remark}

\appendices
\section{EPI for i.i.d. random variables}\label{epi_iid_app}

\begin{proof}[{\bf Proof of Lemma \ref{iid_spiky_bound}}]
Assume that $X$ is a $\mZ$-valued random variable with probability distribution $p$. Let $i\in \mZ$ be such that $p(i)=\|p\|_\infty=x$. Let $p_i$ be the probability distribution $p$ shifted by $i$, i.e., $p_i(k)=p(k-i)$ for every $k \in \mZ$. Assume that $P:=p_i$. Note that $H(p\star p) =H(P\star P)$ and $H(P)=H(p)=c$. 
Let $B$ be a binary random variable with $\pp\{B=0\}=x=1-\pp\{B=1\}$, and let $R$ be a random variable defined by $\pp\{R=k\}= p_i(k)/(1-x)$ for every $k \in \mZ \setminus \{0\}$ and $\pp\{R=0\}= 0$. Note that $X=BR$ for independent $B$ and $R$. We also have  
$H(X)=h_2(x)+(1-x)H(R)$. Let $X'$ be an independent copy of $X$. Then, we have
\begin{align*}
H(P\star P) &=H(BR + X')\\
&\geq H(BR + X'| B)\\
&=x c + (1-x) H(X'+R)\\
&\geq x c + (1-x) H(R)\\
&=x c + c-h_2(x) . 
\end{align*}
This yields $H(p\star p) -c \geq xc-h_2(x)$.
\end{proof}
\vspace{2mm}

\begin{proof}[{\bf Proof of Lemma \ref{l1_gap_iid}}]
Let $\alpha_1 = p((-\infty,n])$ and $\alpha_2 = p([n+1,\infty))=1-\alpha_1$. 
Note that $p = \alpha_1 p_1 + \alpha_2 p_2$. We distinguish two cases $\alpha_1 \leq \frac{1}{2}$ and $\alpha_1 > \frac{1}{2}$. If $\alpha_1 \leq \frac{1}{2}$ then we have 
\begin{align*}
\Vert p&\star p_1 - p\star p_2 \Vert  \\
&= \Vert \alpha_1 p_1 \star p_1 -(1-\alpha_1) p_2 \star p_2 + (1-2\alpha_1) p_1 \star p_2 \Vert _1\\ 
& \geq  \Vert  \alpha_1 p_1 \star p_1 -(1-\alpha_1) p_2 \star p_2 \Vert _1 - (1-2\alpha_1) \Vert p_1 \star p_2 \Vert _1\\
&=\alpha_1 + (1-\alpha_1) -(1-2\alpha_1)=2 \alpha_1\geq 2 \alpha,
\end{align*}
whereas if $\alpha_1 >\frac{1}{2}$ we have 
\begin{align*}
\Vert p&\star p_1 - p\star p_2 \Vert  \\
&= \Vert \alpha_1 p_1 \star p_1 -(1-\alpha_1) p_2 \star p_2 + (1-2\alpha_1) p_1 \star p_2 \Vert _1\\
& \geq  \Vert  \alpha_1 p_1 \star p_1 -(1-\alpha_1) p_2 \star p_2 \Vert _1 - (2\alpha_1-1) \Vert p_1 \star p_2 \Vert _1\\
&=\alpha_1 + (1-\alpha_1) -(2\alpha_1-1)=2(1- \alpha_1)\geq 2 \alpha,
\end{align*}
where we used the triangle inequality, $1-\alpha_1\geq \alpha$ and the fact that $p_1\star p_1$ and $p_2 \star p_2$ have non-overlapping  supports, so the $\ell _1$-norm of the sum is equal to sum of the corresponding $\ell _1$-norms.
\end{proof}
\vspace{2mm}

\begin{proof}[{\bf Proof of Lemma \ref{l1_gap_iid_pinf}}]
Let $n_0\in \mZ$ be such that $p(n_0)=\|p\|_\infty=x$. We have
\begin{align*}
\|p\star p_1 - p\star p_2\|_1&=\sum _{i\in \mZ} |p\star p_1 (i) - p\star p_2(i)|\\
&=\sum_{i\in \mZ} | \sum _{j\in \mZ} p(j) (p_1(i-j)-p_2(i-j))|\\
&\geq \sum_{i\in \mZ} p(n_0) | p_1(i-n_0)-p_2(i-n_0)| \\
&- \sum _{i\in \mZ} \sum_{j \neq n_0} p(j) | p_1(i-j) -p_2(i-j)|\\
&=x \|p_1-p_2\|_1 -(1-x) \|p_1-p_2\|_1\\
&=2(2x-1),
\end{align*}
where we used the fact that $p_1$ and $p_2$ have non-overlapping supports thus $\|p_1-p_2\|_1=\|p_1\|_1+\|p_2\|_1=2$. As $\|p\star p_1-p\star p_2\|_1\geq 0$, we have $\|p\star p_1-p\star p_2\|_1\geq 2(2x-1)^+$.
\end{proof}
\vspace{2mm}

\begin{proof}[{\bf Proof of Lemma \ref{pinsker_iid}}]
Let $\alpha_1$ and $\alpha_2$ be the same as in the proof of Lemma \ref{l1_gap_iid}. Let $\nu _1=p_1 \star p$, $\nu _2=p_2 \star p$, and for $x \in [0,1]$, define $\mu_x=x \nu _1 + (1-x) \nu _2$ and $f(x)=H( \mu _x)$. We have
\begin{align*}
f'(x)&=- \sum (\nu _{1i} -\nu_{2i}) \log_2(\mu _{xi}),\\
f''(x)&= -\frac{1}{\ln(2)} \sum \frac{(\nu_{1i}-\nu_{2i})^2}{\mu_{xi}} \leq 0.
\end{align*}

Therefore, $f(x)$ is a concave function of $x$. Moreover, 
\begin{align*}
f'(0)&=\ \ D(\nu _1 \Vert \nu_2) + H(\nu _1)-H(\nu _2),\\
f'(1)&=-D(\nu _2 \Vert \nu _1)+ H(\nu_1)-H(\nu _2).
\end{align*}
Since $p_1$ and $p_2$ have different supports, there are $i,j$ such that $\nu _{1i}=0, \nu_{2i}>0$ and $\nu_{1j}>0, \nu_{2j}=0$. Hence $D(\nu_1 \Vert \nu _2)$ and $D(\nu_2 \Vert \nu_1)$ are both equal to infinity. In other words, 
\begin{align*}
f^{\prime}(0)= + \infty, f^{\prime}(1)= - \infty.
\end{align*}

Hence, the unique maximum of the function $f$ must happen between $0$ and $1$. Assume that for fixed $\nu_1$ and $\nu_2$, $x^\star$ is the maximizer. If $0<\alpha_1 \leq x^\star$ then $$\alpha_1 f^{'}(\alpha_1)=\sum \alpha_1 (\nu _{2i} - \nu_{1i}) \log_2(\mu_{\alpha_1 i}) \geq 0,$$ which implies that
\begin{align*}
f(\alpha_1)&= - \sum \mu_{\alpha_1 i} \log_2(\mu_{\alpha_1 i})\\
&=- \sum (\nu_{2i} + \alpha_1 (\nu _{1i} - \nu_{2i}) )\log_2(\mu_{\alpha_1 i})\\
& \geq - \sum \nu_{2i} \log_2(\mu_{\alpha_1 i})\\
&=  H(\nu_2) + D(\nu _2 \Vert \mu_{\alpha_1})\\
&\geq H(p) + \frac{1}{2 \ln(2)} \| \nu_2 - \mu_{\alpha_1} \| _1 ^2 \\
&= H(p) +  \frac{ \alpha_1 ^2}{2 \ln(2)} \| \nu _1 -\nu _2 \|_1^2,
\end{align*}
where we used Pinsker's inequality for distributions $r$ and $s$, $$D(r \Vert s) \geq \frac{1}{2 \ln(2)} \Vert r-s \Vert _1^2.$$

Similarly, we can show that if $x^\star \leq \alpha_1 \leq 1$ then 
\begin{align*}
f(\alpha _1)\geq H(p) + \frac{(1- \alpha _1) ^2}{2\ln(2)} \| \nu _1 -\nu _2 \|_1^2.
\end{align*}
As $\alpha \leq \alpha _1 \leq 1-\alpha$ and $ \alpha \leq \frac{1}{2}$ it results that 
\begin{align*}
H(p\star p) &= H(\alpha_1 p \star p_1 + (1-\alpha_1) p\star p_2)\\
&=f(\alpha _1)\\
&\geq H(p) + \frac{\alpha ^2}{2\ln(2)} \| \nu _1 -\nu _2 \|_1^2\\
&\geq c + \frac{\alpha ^2}{2\ln(2)} \| \nu _1 -\nu _2 \|_1^2.
\end{align*}
\vspace{-1cm} $\qedhere$

\end{proof}

\vspace{2mm}

\begin{proof}[{\bf Proof of Lemma \ref{epi_iid_nonspiky}}]
Let $x=\|p\|_\infty$ and $\alpha=\frac{1-x}{2}$. It is easy to show that there is an $n \in \mZ$ such that $\alpha \leq p((-\infty,n]) \leq 1-\alpha$. Also let $p_1$ and $p_2$, as in Lemma \ref{l1_gap_iid}, be the restriction of $p$ to $(-\infty,n]$ and $[n+1,\infty)$. As $p_1$ and $p_2$ have disjoint supports, using Lemma \ref{l1_gap_iid} and \ref{l1_gap_iid_pinf}, it results that $$\|p\star p_1 - p\star p_2\|_1 \geq (1-x) \vee (4x-2)^+,$$ Therefore, using Lemma \ref{pinsker_iid}, we get
\begin{align*}
H(p\star p) -c \geq \frac{(1-x)^2}{8 \ln(2)} ((1-x)\vee (4x-2)^+)^2.
\end{align*}
\vspace{-1cm} $\qedhere$

\end{proof}
\vspace{3mm}

\section{EPI for non-i.i.d. random variables}\label{epi_niid_app}

\begin{proof}[{\bf Proof of Lemma \ref{niid_spiky_bound}}]
Let $X$ and $Y$ be two independent random variables with probability distribution $p$ and $q$. Similar to the proof of Lemma \ref{iid_spiky_bound}, there is a binary random variable $B$, $\pp(B=0)=x$ and a random variable $R$ independent of $B$ such that $\tilde{X}=BR$, where $\tilde{X}$ is a suitably shifted version of $X$ such that $\pp(\tilde{X}=0)=x$. Also, $H(X)=h_2(x) + (1-x)H(R)$. Then, we get
\begin{align*}
H(p\star q)&=H(X+Y)\\
&=H(\tilde{X}+Y)=H(BR+Y)\\
&\geq H(BR+Y|B)\\
&\geq \pp(B=0)H(Y) + \pp(B=1) H(R+Y)\\
&\geq x d +(1-x)H(R)\\
&=x d + c-h_2(x),
\end{align*}
which implies that $H(p\star q)-c \geq x d -h_2(x)$. By symmetry, we also obtain that $H(p\star q)-d \geq y c -h_2(y)$. Combining these two results we get 
\begin{align*}
2H(p\star q)-c -d \geq d x-h_2(x) + cy -h_2(y).
\end{align*}
\vspace{-1cm} $\qedhere$

\end{proof}

\vspace{2mm}
\begin{proof}[{\bf Proof of Lemma \ref{l1_gap_niid}}]
Let $\alpha_1 = p((-\infty,m])$, $\alpha_2 =1-\alpha_1$, $\beta_1 = q((-\infty,n])$ and $\beta_2 =1-\beta_1$. 
Note that $p = \alpha_1 p_1 + \alpha_2 p_2$ and $q = \beta_1 q_1 + \beta_2 q_2$. Thus we obtain  
\begin{align*}
 \| q &\star p_1 -  q \star p_2 \|_1 +  \| p \star q_1 -  p \star q_2 \|_1  \\
&\geq \| q \star p_1 -  q \star p_2 +  p \star q_1 -  p \star q_2 \|_1  \\
& = \| (\alpha_1 + \beta_1) p_1 \star q_1 + (\beta_2-\alpha_1) p_1 \star q_2 \\
 &\ \ \  + (\alpha_2 - \beta_1) p_2 \star q_1 -(\alpha_2+\beta_2) p_2 \star q_2 \|_1  \\
& \geq \| (\alpha_1 + \beta_1) p_1 \star q_1 -(\alpha_2+\beta_2) p_2 \star q_2 \|_1 \\
& - \| (\beta_2-\alpha_1) p_1 \star q_2 + (\alpha_2 - \beta_1) p_2 \star q_1  \|_1  \\
& \geq \alpha_1 + \beta_1 + \alpha_2 + \beta_2 - |\beta_2-\alpha_1| - |\alpha_2 - \beta_1| \\
& = 2(1- |1-(\alpha_1+\beta_1)|),
\end{align*}
where we used the triangle inequality and the fact that $p_1\star q_1$ and $p_2 \star q_2$ have non-overlapping supports. Now, two cases can happen: if $\alpha_1+\beta_1 \leq 1$ then 
$(1- |1-(\alpha_1+\beta_1)|)=(\alpha_1+\beta_1) \geq (\alpha+\beta).$ Otherwise, $\alpha_1+\beta_1 >1$ and we obtain 
\begin{align*}
(1- |1-(\alpha_1+\beta_1)|)&=2-(\alpha_1+\beta_1)\\
&=\alpha_2+\beta_2 \geq \alpha+\beta.
\end{align*}
Therefore, in both cases we get 
\begin{align*}
 \| q &\star p_1 -  q \star p_2 \|_1 +  \| p \star q_1 -  p \star q_2 \|_1 \geq 2(\alpha+\beta),
 \end{align*}
which is the desired result.
\end{proof}
\vspace{1mm}

\begin{proof}[{\bf Proof of Lemma \ref{pinsker_niid1}}]
Let $\alpha_1 := p((-\infty,m])$, $\alpha_2 :=1-\alpha_1$, $\nu _1:=p_1 \star q$, $\nu _2:=p_2 \star q$, and for $x \in [0,1]$, let $\mu_x:=x \nu _1 + (1-x) \nu _2$ and $f(x):=H( \mu _x)$. By an argument similar to what we had in the proof of Lemma \ref{pinsker_iid}, we can show that 
\begin{align*}
H(p \star q)= f(\alpha _1)\geq d + \frac{\alpha ^2}{2\ln(2)} \| \nu _1 -\nu _2 \|_1^2,
\end{align*}
which implies that
\begin{align*}
H(p \star q)- d \geq \frac{\alpha ^2}{2\ln(2)} \| q \star p_1 -\ q \star p_2 \|_1^2.
\end{align*}
The other inequality in the lemma follows by symmetry. 
\end{proof}
\vspace{1mm}

\begin{proof}[{\bf Proof of Lemma \ref{pinsker_niid2}}]
As $\|p\|_\infty=x,\|q\|_\infty=y$, setting $\alpha=\frac{1-x}{2}$ and $\beta=\frac{1-y}{2}$ and using Lemma \ref{pinsker_niid1}, we obtain
\begin{align*}
2H(p \star q) -c-d &\geq  \frac{\alpha^2 a^2 +\beta^2 b^2}{2\ln(2)}\\
&=\frac{(1-x)^2 a^2 + (1-y)^2 b^2}{8 \ln(2)}
\end{align*}
where $a=\| q \star p_1 - q \star p_2 \|_1$ and $b=\| p \star q_1 - p \star q_2 \|_1$. Also, from Lemma \ref{l1_gap_niid}, we have
 \begin{align}\label{pinsker_niid1_temp1}
 a+b \geq 2(\alpha+\beta)=2-x-y.
 \end{align}
 Furthermore, applying Lemma \ref{l1_gap_iid_pinf} to the distribution $p$ with $\|p\|_\infty=x$ and $q_1,q_2$ with disjoint supports, and similarly to $q$ with $\|q\|_\infty=y$ and $p_1,p_2$ with disjoint supports, we get
 \begin{align}\label{pinsker_niid1_temp2}
b\geq (4x-2)^+, a\geq (4y-2)^+.
\end{align}
Therefore,
\begin{align*}
2H(p \star q) -c-d &\geq l(x,y),
\end{align*}
where $$l(x,y)=\min_{(a,b)\in T(x,y)}\frac{(1-x)^2 a^2 + (1-y)^2 b^2}{8 \ln(2)},$$ and $T(x,y)$ is defined by the three inequalities derived in (\ref{pinsker_niid1_temp1}) and (\ref{pinsker_niid1_temp2}).

The continuity of $l(x,y)$ can be easily checked. For the last part of the lemma, notice that if $M:=x\vee y<1$ then it is not difficult to show that 
\begin{align*}
l(x,y) \geq \min _{a+b \geq 2-2M} \frac{(1-M)^2}{8 \ln(2)}(a^2+b^2)\geq \frac{(1-M)^4}{4 \ln(2)}>0,
\end{align*}
which is strictly positive. Moreover, if $x\vee y=1$ but $(x,y)\neq (1,1)$ then, for example, $y\in [0,1), x=1$, which implies that $b\geq2$. Therefore, we get 
$l(x,y)\geq \frac{(1-y)^2}{2 \ln(2)},$
which is strictly positive unless $y=1$. A similar argument applies to $x\in [0,1), y=1$. Therefore, over $(x,y)\in [0,1]\times[0,1]$, $l(x,y)\geq 0$ and $l(x,y)=0$ if and only if $(x,y)=(1,1)$.
\end{proof}

\section{Conditional EPI}\label{epi_cond_app}

{\bf Proof of Lemma \ref{small_alpha_beta}:} To prove the lemma, notice that we have the constraint $H(X|Y)=H(X'|Y')=c$ and the probability distribution of $Y,Y'$ has a support of size $2$. We first prove that it is possible to modify the conditional distribution of the random variables $X$ and $X'$ given $Y$ and $Y'$ in a way that none of the constraints are violated, $H(X+X'|Y,Y')$ remains fixed and simultaneously, $H(Y|X)$ and $H(Y'|X')$ become as small as we want. To show this , let
$p_i,p'_j$, $i,j \in \{0,1\}$ be the distribution of $X,X'$ conditioned on $Y=i,Y'=j$. Notice that if we shift any $p_i,p'_j$ to the right or to the left by as many steps as we want, the conditional entropies remain unchanged so does $H(X+X'|Y,Y')$. We claim that by suitable shift of distributions, it is possible to make $H(Y|X)$ as small as we want. The same is true for $H(Y'|X')$.

To prove the claim, let $\epsilon>0$ and assume that $A_\epsilon$ and $B_\epsilon$ are subsets of $\mZ$ of minimal size such that $p_0(A_\epsilon)\geq 1-\epsilon/2$ and $p_1(B_\epsilon)\geq 1-\epsilon/2$. In particular, for any $i\in A_\epsilon, j \in B_\epsilon$, $p_0(i)>0,p_1(j)>0$. Moreover, 
\begin{align*}
\pp(X \in A_\epsilon \cup B_\epsilon) &\geq \alpha p_0(A_\epsilon) + (1-\alpha)p_1(B_\epsilon) \\
&\geq 1-\frac{\epsilon}{2}.
\end{align*}
For $n\in \mZ_+$, let us define $ B^{(n)}_\epsilon=\{i+n: i \in B_\epsilon\}$, to be the right shift of $B_\epsilon$ by $n$. Also assume that $p_1^{(n)}$ is the probability distribution shifted to the right by $n$, namely, for $k\in \mZ$, $ p_1^{(n)}(k)=p_1(k-n)$.
Specially, this implies that
\begin{align*}
p_1^{(n)}(B_\epsilon ^{(n)})=p_1(B_\epsilon).
\end{align*}
Now let us replace $p_1$, by $p_1^{(n)}$ and let us the denote the resulting random variable by $\tilde{X}$. This assumption does not change $H(X|Y)$ and $H(X+X'|Y,Y')$. As $A_\epsilon$ and $B_\epsilon$ are finite sets, there is $N_1$ such that for all $n>N_1$ , the two sets $A_\epsilon$ and $B^{(n)}_\epsilon$ are disjoint. For $a\in A_\epsilon$ and $b\in B_\epsilon$, let us compute the conditional distribution of $Y$ given $\tilde{X}=a$ and $\tilde{X}=b+n \in B^{(n)}_\epsilon$. We have 
\begin{align*}
\pp(Y=0|\tilde{X}=a)&=\frac{\alpha p_0(a)}{\alpha p_0(a) + (1-\alpha) p_1(a-n)},\\
\pp(Y=1|\tilde{X}=b+n)&=\frac{(1-\alpha) p_1(b)}{(1-\alpha) p_1(b) + \alpha p_0(b+n)}.
\end{align*}
It is not difficult to see that for all $a\in A_\epsilon $ and all $b \in B_\epsilon$, both of these numbers converge to $1$ as $n$ goes to infinity which implies that  both $H(Y|\tilde{X}=a)$ and $H(Y|\tilde{X}=b)$  converge to $0$. In particular, there is an $N_2$ such that for $n>N_2$ these two numbers are less than $\frac{\epsilon}{2}$. Therefore, for $n>\max\{N_1,N_2\}$ we have
\begin{align*}
H_n(Y|\tilde{X})&=\sum _{k\in \mZ} p_{\tilde{X}}(k) H(Y|\tilde{X}=k) \\
&\leq \sum _{k \in A_\epsilon \cup B^{(n)}_\epsilon} p_{\tilde{X}}(k) \times \frac{\epsilon}{2} +\sum _{k \notin A_\epsilon \cup B^{(n)}_\epsilon} p_{\tilde{X}}(k) \times 1 \\
&= \sum _{k \in A_\epsilon \cup B_\epsilon} p_X(k) \times \frac{\epsilon}{2} +\sum _{k \notin A_\epsilon \cup B_\epsilon} p_X(k) \leq \epsilon,
\end{align*}
which proves the claim. Now assume that we have selected $(X,Y),(X',Y')$ such that $H(Y|X),H(Y'|X') <\epsilon$ for some positive small number $\epsilon$. Then we have
\begin{align*} 
H(&X+X'|Y,Y')-c \\
&=H(X+X')- H(X) -I(X+X'|Y,Y')+I(X;Y)\\
&\geq H(X+X')-H(X) - H(Y,Y')+H(Y)-H(Y|X) \\
&\geq H(X+X')-H(X) - H(Y,Y')+H(Y)-\epsilon \\
&\geq H(X+X')-H(X) -H(Y') -\epsilon\\
&\geq g(H(X),H(X'))-h_2(\beta) -\epsilon\\
&\geq  g(c,c) - h_2(\beta)-\epsilon,
\end{align*}
where we used the independence of $Y,Y'$, increasing property of $g$ and the fact that $H(X)\geq H(X|Y)=c$ and similarly $H(X')\geq c$. As this is true for any $\epsilon>0$, we obtain $$H(X+X'|Y,Y')-c\geq g(c,c)-h_2(\beta).$$ By symmetry, we also have $$H(X+X'|Y,Y')-c\geq g(c,c)-h_2(\alpha).$$ Therefore, we get the desired result $$H(X+X'|Y,Y')-c \geq  g(c,c)-\min\{h_2(\alpha),h_2(\beta)\}.$$

{\bf Proof of Lemma \ref{large_alpha_beta}:}  Assuming the hypotheses of Lemma \ref{small_alpha_beta}, there must be $i,j \in \{0,1\}$ such that $H(p_i), H(p'_j)\geq c$. Therefore, we have 
\begin{align*}
H(&X+X'|Y,Y')-c \\
&=\sum_{k,l=0}^{1} q_k q'_l ( H(p_k\star p'_l)-\frac{H(p_k) + H(p'_l)}{2})\\ 
&\geq q_i q'_j ( H(p_i \star p'_j) - \frac{H(p_i)+H(p'_j)}{2})\\
&\geq \delta^2 g(c,c).
\end{align*}

\end{document}